\def\ps@pprintTitle{%
 \let\@oddhead\@empty
 \let\@evenhead\@empty
 \def\@oddfoot{}%
 \let\@evenfoot\@oddfoot}
\newcommand{\ta}{{\bf a}}
\newcommand{\tv}{{\bf v}}
\newcommand{\hornsat}[1]{{#1}\operatorname{-HORN}}
\newcommand{\ihbs}[1]{{#1}\operatorname{-IHBS}}
\newcommand{\pol}{\operatorname{Pol}}
\newcommand{\eq}{\operatorname{eq}}
\newcommand{\lp}{\operatorname{BLP}}
\newcommand{\lpmin}{\operatorname{BLP}}
\newcommand{\CSP}[1]{\mbox{\rm CSP$(#1)$}}
\newcommand{\MinCSP}[1]{\mbox{\rm Min CSP$(#1)$}}
\newcommand{\MaxCSP}[1]{\mbox{\rm Max CSP$(#1)$}}
\newcommand{\GCSP}[1]{\mbox{\rm GCSP$(#1)$}}
\newcommand{\VCSP}[1]{\mbox{\rm VCSP$(#1)$}}
\newcommand{\PT}{{\mathrm{P}}}
\newcommand{\NP}{{\mathrm{NP}}}
\newcommand{\APX}{{\mathrm{APX}}}
\def\lpVal{\mathsf{Opt}_{\mathsf{LP}}}
\def\opt{\mathsf{Opt}}
\newcommand{\fas}{\phi}
\newcommand{\dist}{\operatorname{dist}}
\def\blpopt{\mathsf{Opt}_{\mathsf{LP}}}
\newtheorem{theorem}{Theorem}
\newdefinition{example}{Example}
\newtheorem{observation}{Observation}
\newtheorem{lemma}{Lemma}
\long\def\BEGINCOMMENT #1\ENDCOMMENT{\relax}
\newcommand{\maps}\longrightarrow
\newcommand{\cmaps}\Longrightarrow
\newcommand{\csp}{\operatorname{CSP}}
\newcommand{\arity}{\rho}
\renewcommand{\arity}{\operatorname{arity}}
\newcommand{\univ}{U}
\newcommand{\Prob}{\Pr}
\newcommand{\loss}{\operatorname{loss}}
\newcommand{\Ind}{\mathbf{1}}
\newcommand{\weight}[1]{w_{#1}}
\def\cspA{A}
\def\cspLang{\Gamma}
\def\lpName{BLP}
\def\lpVal{\mathsf{Opt}_{\mathsf{LP}}}
\def\eqC{\eq_{\cspA}}
\begin{document}

%

\title{Towards a Characterization of Constant-Factor Approximable Finite-Valued CSPs\tnoteref{t1}\tnoteref{t2}}
\tnotetext[t1]{This article is an extended version of a paper published in the proceedings of SODA'15.}
\tnotetext[t2]{The   first   two   authors   were   supported   by   UK   EPSRC   grant
EP/J000078/01.  The first author was also supported by MICCIN grant
TIN2016-76573-C2-1P and Maria de Maeztu Units of Excellence Programme MDM-2015-0502.
The third author was supported by ERC grant 226-203.}
\author[upf]{V\'{\i}ctor Dalmau}
\ead{victor.dalmau@upf.edu}
\cortext[cor1]{Corresponding author}
\author[du]{Andrei Krokhin}
\ead{Andrei.Krokhin@durham.ac.uk}
\author[iitm]{Rajsekar Manokaran}
\ead{rajsekar@gmail.com}

\address[upf]{Department of Information and Communication Technologies, Universitat Pompeu Fabra, Spain}
\address[du]{Department of Computer Science, Durham University, UK}
\address[iitm]{Computer Science and Engineering department, IIT Madras, India}

\begin{abstract}
In this paper we study the approximability of (Finite-)Valued Constraint Satisfaction Problems (VCSPs) with a fixed finite constraint language $\Gamma$ consisting of finitary functions on a fixed finite domain. An instance of VCSP is given by a finite set of variables and a sum of functions belonging to $\Gamma$ and depending on a subset of the variables. Each function takes values in $[0,1]$ specifying costs of assignments of labels to its variables, and the goal is to find an assignment of labels to the variables that minimizes the sum. A recent result of Ene {\it et al.} says that, under the mild technical condition that $\Gamma$ contains the function corresponding to the equality relation, the basic LP relaxation is optimal for constant-factor approximation for $\VCSP\Gamma$ unless the Unique Games Conjecture fails. Using the algebraic approach to the CSP, we give new natural algebraic conditions for the finiteness of the integrality gap for the basic LP relaxation of $\VCSP\Gamma$. We also show how these algebraic conditions can in principle be used to round solutions of the basic LP relaxation, and how this leads to efficient constant-factor approximation algorithms for several examples that cover all previously known cases that are NP-hard to solve to optimality but admit constant-factor approximation. Finally, we show that the absence of another algebraic condition leads to NP-hardness of constant-factor approximation. Thus, our results strongly indicate where the boundary of constant-factor approximability for VCSPs lies.
\end{abstract}



\begin{keyword}
constraint satisfaction problem \sep approximation algorithms \sep universal algebra
\end{keyword}



\maketitle

\section{Introduction}
\label{sec:Intro}

The constraint satisfaction problem (CSP) provides a framework in which it is
possible to express, in a natural way, many combinatorial problems encountered in
computer science and AI~\cite{Creignou01:book,Creignou08:complexity,Krokhin17:book}. Standard examples of CSPs include
satisfiability of propositional formulas, graph colouring problems, and systems of linear equations.
An instance of the CSP consists of
a set of variables, a (not necessarily Boolean) domain of labels, and a set
of constraints on combinations of values that can  be taken by certain subsets of variables. The aim is
then to find an assignment of labels to the variables that, in the decision version, satisfies all the
constraints or, in the optimization version, maximizes (minimizes) the number of satisfied (unsatisfied, respectively) constraints.

Since the CSP is NP-hard in full generality, a major line of research in CSP tries to identify
special cases that have desirable algorithmic properties
(see, e.g.~\cite{Creignou01:book,Creignou08:complexity,Krokhin17:book}), the primary motivation being the general picture
rather than specific applications.
The two main ingredients of a constraint are: (a) variables to which it is applied, and (b) relation specifying
the allowed combinations of labels.
Therefore, the main types of restrictions on CSP are: (a) {\em structural} where
the hypergraph formed by sets of variables appearing in individual constraints is restricted~\cite{Gottlob09:tractable,Marx13:tractable}, and (b) {\em language-based} where the constraint language $\Gamma$, i.e. the set of relations that can appear in constraints, is fixed (see, e.g.~\cite{Bulatov??:classifying,Creignou01:book,Feder98:monotone,Krokhin17:book}); the corresponding decision/maximization/minimization problems are denoted by $\CSP\Gamma$, $\MaxCSP\Gamma$, and $\MinCSP\Gamma$, respectively.
The ultimate sort of results in this direction are dichotomy results, pioneered by~\cite{Schaefer78:complexity},
which completely characterise the restrictions with a given desirable property modulo some complexity-theoretic assumptions.
The language-based direction is considerably more active than the structural one, and there are many (partial and full) language-based complexity classification results, e.g.~\cite{Barto14:jacm,Barto16:robust,Bulatov17:dichotomy,Deineko08:constants,Kolmogorov17:complexity,Thapper16:finite,Zhuk17:dichotomy}, but many questions are still open.

Problems Max CSP and Min CSP can be generalised by replacing relations (that specify allowed combinations of labels) with functions that specify
a value in $[0,1]$ (measuring the desirability or the cost, respectively) for each tuple of labels. The goal would then be to find an assignment
of labels that maximizes the total desirability (minimizes the total cost, repectively). The maximization version was studied in~\cite{Brown15:combinatorial,Raghavendra08:optimal} under the name of Generalized CSP, or GCSP, (in fact, functions there can take values in $[-1,1]$), while the minimization version is known as (Finite-)Valued CSP~\cite{Thapper16:finite}. In General-Valued CSP, functions can also take the infinite value to indicate infeasible tuples~\cite{Cohen13:algebraic,Cohen06:soft,Kolmogorov17:complexity}, but we will not consider this case in this paper. In this paper we write VCSP to mean {\em finite-valued} CSP.
We note that~\cite{Ene13:local} write Min CSP to mean what we call VCSP in this paper.
Naturally, both GCSP and VCSP can be parameterized by constraint languages $\Gamma$, now consisting of functions instead of relations.

The CSP has always played an important role in mapping the landscape of approximability of NP-hard optimization problems, see e.g. surveys~\cite{Khot10:UGCsurvey,Makarychev17:survey}.
For example, the famous PCP theorem has an equivalent reformulation in terms of inapproximability of a
certain $\MaxCSP\Gamma$, see~\cite{Arora09:complexity}; moreover, Dinur's combinatorial proof of this theorem~\cite{Dinur07:PCP}
deals entirely with CSPs.
The first optimal inapproximability results~\cite{Hastad01:optimal} by H{\aa}stad were about problems $\MaxCSP\Gamma$,
and they led to the study of a new hardness notion called
approximation resistance (see, e.g.~\cite{Austrin08:thesis,Hastad08:2csp,Khot14:strong}).
The approximability of Boolean CSPs has been thoroughly investigated (see, e.g.~\cite{Agarwal05:approx,Creignou01:book,Guruswami16:mixed,Guruswami12:tight,Hastad01:optimal,Hastad08:2csp,Khot10:UGCsurvey,Khot07:optimal}).
Much work around the Unique Games Conjecture (UGC) directly concerns CSPs~\cite{Khot10:UGCsurvey}. This conjecture states
that, for any $\epsilon> 0$, there is a large enough number $k=k(\epsilon)$ such that it $\NP$-hard to tell $\epsilon$-satisfiable from
$(1-\epsilon)$-satisfiable instances of $\CSP{\Gamma_k}$, where $\Gamma_k$ consists of all graphs of bijections on a $k$-element set.
Many approximation algorithms for classical optimization problems have been shown optimal assuming the UGC~\cite{Khot10:UGCsurvey,Khot07:optimal}.
Raghavendra proved~\cite{Raghavendra08:optimal} that one SDP-based algorithm provides  optimal approximation for all problems $\GCSP\Gamma$ assuming the UGC.
In this paper, we investigate problems $\VCSP\Gamma$ and $\MinCSP\Gamma$ on an arbitrary finite domain that belong to APX, i.e. admit a (polynomial-time) constant-factor approximation algorithm,
proving some results that strongly indicate where the boundary of this property lies.

{\bf Related Work.}
Note that each problem $\MaxCSP\Gamma$ trivially admits a constant-factor approximation algorithm because a random assignment of values to
the variables is guaranteed to satisfy a constant fraction of constraints; this can be derandomized by the
standard method of conditional probabilities. The same also holds for GCSP. Clearly, for $\MinCSP\Gamma$ to admit a constant-factor approximation algorithm, $\CSP\Gamma$
must be polynomial-time solvable.

The approximability of problems $\VCSP\Gamma$ has been studied, mostly for Min CSPs in the Boolean case (i.e., with domain $\{0,1\}$, such CSPs are sometimes called ``generalized satisfiability'' problems), see~\cite{Agarwal05:approx,Creignou01:book}.
We need a few concepts from propositional logic. A clause is {\em Horn} if it contains at most one positive literal, and {\em negative} if it contains only negative literals.
Let $\hornsat{k}$ be the constraint language over the Boolean domain that contains all Horn clauses with at most $k$ variables.
For $k\ge 2$, let $\ihbs{k}$ be the subset of $\hornsat{k}$ that consists of all clauses that are negative or have at most 2 variables.
It is known that, for each $k\ge 2$, $\MinCSP{\ihbs{k}}$ belongs to $\APX$~\cite{Creignou01:book}, and they (and the corresponding dual Horn problems)  are essentially the only such Boolean Min CSPs unless the UGC fails~\cite{Dalmau13:robust}. For $\MinCSP{\hornsat{2}}$, which is identical to $\MinCSP{\ihbs{2}}$, a 2-approximation (LP-based) algorithm is described in~\cite{Guruswami12:tight}, which is optimal assuming the UGC, whereas it is $\NP$-hard to constant-factor approximate $\MinCSP{\hornsat{3}}$~\cite{Guruswami16:mixed}.
If $\neq_2$ is the Boolean relation $\{(0,1),(1,0)\}$, then $\MinCSP{\{\neq_2\}}$ is known as {\sc MinUnCut}. $\MinCSP\Gamma$ where $\Gamma$ consists
of 2-clauses is known as {\sc Min 2CNF Deletion}. The best currently known approximation algorithms for {\sc MinUnCut} and {\sc Min 2CNF Deletion} have approximation ratio $O(\!\sqrt{\log n})$~\cite{Agarwal05:approx}, and it follows from~\cite{Khot07:optimal} that neither problem belongs to $\APX$ unless the UGC is false.
The UGC is known to imply the optimality of the basic LP relaxation for any $\VCSP\Gamma$ such that $\Gamma$ contains the (characteristic function of the) equality relation~\cite{Ene13:local}, extending the line of similar results for natural LP and SDP relaxations for
various optimization CSPs~\cite{Kumar11:strict,ManokaranNRS08,Raghavendra08:optimal}.

An approximation algorithm for any $\VCSP\Gamma$ was also given in the 2013 conference version of~\cite{Ene13:local} (that was claimed to match the LP integrality gap), but its analysis was later found to be faulty and this part was retracted in the 2015 update of~\cite{Ene13:local}.
The SDP rounding algorithm for GCSPs from~\cite{Raghavendra09:how} is discussed in detail in the book~\cite{Gaertner12:approximation}, where it is pointed out that the same algorithm does not work for VCSPs.


Constant-factor approximation algorithms for Min CSP are closely related to certain {\em robust algorithms} for CSP that attracted much attention recently~\cite{Barto16:robust,Dalmau13:robust,Dalmau17:robust,Kun12:robust}. Call an algorithm for $\CSP\Gamma$ {\em robust} if, for every $\epsilon>0$ and every $(1-\epsilon)$-satisfiable
instance of $\CSP\Gamma$ (i.e. at most an $\epsilon$-fraction of constraints can be removed to make the instance satisfiable),
it outputs a $(1-f(\epsilon))$-satisfying assignment (i.e. that fails to satisfy at most a $f(\epsilon)$-fraction of constraints) where $f$ is a function
such that $f(\epsilon)\rightarrow 0$ as $\epsilon\rightarrow 0$ and $f(0)=0$.
CSPs admitting a robust algorithm (with some function $f$) were completely characterised
in~\cite{Barto16:robust}; when such an algorithm exists, one can always choose $f(\epsilon)=O(\log\log{(1/\epsilon)}/\log{(1/\epsilon)})$ for the randomized algorithm and $f(\epsilon)=O(\log\log{(1/\epsilon)}/\sqrt{\log{(1/\epsilon)}})$ for the derandomized version.
A robust algorithm is said to have {\em linear loss} if the function $f$ can be chosen so that $f(\epsilon)=O(\epsilon)$.
The problem of characterizing CSPs that admit a robust algorithm with linear loss was posed in~\cite{Dalmau13:robust}.
It is easy to see that, for any $\Gamma$, $\CSP\Gamma$ admits a robust algorithm with linear loss if and only if $\MinCSP\Gamma$ has a constant-factor approximation algorithm. We will use this fact when referring to results in~\cite{Dalmau13:robust}.

Many complexity classification results for CSP have been made possible by the introduction of the
universal-algebraic approach (see, e.g., survey~\cite{Barto17:polymorphisms}), which extracts
algebraic structure from a given constraint language $\Gamma$ (via operations called {\em polymorphisms} of $\Gamma$)
and uses it to analyze problem instances. This approach was extended to VCSP (see, e.g., survey~\cite{Krokhin17:valued}), where
polymorphisms are replaced by certain probability distributions on operations called {\em fractional polymorphisms}.
The universal-algebraic framework to study robust algorithms with a given loss
was presented in~\cite{Dalmau13:robust}, this approach was also used in~\cite{Barto16:robust,Kun12:robust}.
In this paper, we apply this framework with some old and some new algebraic conditions
to study problems $\VCSP\Gamma$ and $\MinCSP\Gamma$.
Our algebraic conditions use symmetric operations, which appear naturally when LP-based algorithms are used for (V)CSPs; other recent examples are~\cite{Kolmogorov17:complexity,Kolmogorov15:power,Kun12:robust,Thapper16:finite}.

{\bf Contributions.} Some of our results assume that $\Gamma$ contains the equality relation.
We characterise problems $\VCSP\Gamma$ for which the basic LP relaxation has finite integrality gap. The characterisation is in terms of appropriately modified fractional polymorphisms.
We then show how that a description of constant-factor approximable VCSPs can be reduced to that for Min CSPs.
For Min CSPs, we give another algebraic condition that characterizes the property of being constant-factor approximable.
This characterization uses the algebraic approach to CSP that has been extremely fruitful in proving complexity classification results for CSPs.
The characterizing condition is in terms of Lipschitz probability distributions on symmetric polymorphisms of $\Gamma$.
This condition can in principle be used to design efficient constant-factor approximation algorithms, provided
one can efficiently sample from these distributions.
We show that this is possible for some examples that cover all cases where such algorithms (but not algorithms finding an optimal solution) were previously known to exist.

It follows from the \cite{Ene13:local} that every Min CSP for which the basic LP relaxation does not have finite integrality gap is not constant-factor approximable, unless the UGC fails. For a class of Min CSPs we strengthen the UG-hardness to NP-hardness. A near-unanimity polymorphism is a type of polymorphism well known in the algebraic theory of CSP~\cite{Barto12:NU,Barto17:polymorphisms,Feder98:monotone}, and its presence follows from the existence of those Lipschitz distributions. We show $\MinCSP\Gamma$ is NP-hard to constant-factor approximate if $\Gamma$ has no near-unanimity polymorphism.

Thus, our results strongly indicate where the boundary of constant-factor approximability for VCSPs lies.

\section{Preliminaries}

Let $A$ be a finite set. A $k$-tuple $\ta=(a_1,\dots,a_k)$ on $A$ is any element of $A^k$.
A $k$-ary relation on $A$ is a subset of $A^k$.  We shall use $\arity(R)$ to denote the arity of relation $R$.
We shall denote by $\eqC$ the binary relation $\{(a,a) \mid a\in A\}$.

An {\em instance} of the $\csp$ is a triple $I=(V,A,{\mathscr C})$ with $V$ a finite set of {\em variables}, $A$ a finite set called {\em domain}, and ${\mathscr C}$ a finite list of {\em constraints}. Each constraint in ${\mathscr C}$ is a pair $C=(\tv,R)$, also denoted $R(\tv)$, where $\tv=(v_1,\dots,v_k)$ is a tuple of variables of length $k$, called the {\em scope} of $C$, and $R$ an $k$-ary relation on $A$, called the {\em constraint relation} of $C$. The {\em arity} of a constraint $C$, $\arity(C)$, is the arity of its constraint relation. When considering optimization problems, we will assume
that each constraint has a weight $w_C\in \mathbb{Q}_{> 0}$.
It is known (see, e.g.~Lemma~7.2 in~\cite{Creignou01:book}) that allowing weights in $\MinCSP\Gamma$ does not affect membership in $\APX$.

Very often we will say that a constraint $C$ belongs to instance $I$ when, strictly speaking, we should be saying that appears in the constraint list ${\mathscr C}$ of $I$. Also, we  might sometimes write $(v_1,\dots,v_k,R)$ instead of $((v_1,\dots,v_k),R)$.
 A {\em constraint language} is any {\em finite} set $\Gamma$ of relations on $A$. The problem $\csp(\Gamma)$ consists of all instances of the $\csp$ where all the constraint relations are from $\Gamma$.
An assignment for $I$ is a mapping $s:V\rightarrow A$. We say that $s$ {\em satisfies} a constraint $(\tv,R)$ if $s(\tv)\in R$ (where $s$ is applied component-wise).

The {\em decision problem} for  $\csp(\Gamma)$ asks whether an input instance $I$ of $\csp(\Gamma)$ has a solution, i.e., an assignment satisfying all constraints.  The natural {\em optimization problems} for $\csp(\Gamma)$, $\MaxCSP\Gamma$ and $\MinCSP\Gamma$, ask to find an assignment that maximizes the total weight of satisfied constraints or minimizes the total weight of unsatisfied constraints, respectively.

VCSP is the generalization of Min CSP obtained by allowing a richer set of constraints. Formally, a constraint in a VCSP instance is a pair $C=(\tv,\varrho)$, also denoted $\varrho(\tv)$, where $\tv=(v_1,\dots,v_k)$ is, as before, a tuple of variables, and $\varrho:A^k\rightarrow [0,1]$ is a mapping from $A^k$ to $[0,1]$. Given an instance $I$ of the VCSP, the goal is to find
an assignment $s:V\rightarrow A$ that minimizes $$\sum_{C=\varrho(\tv)\in{\mathscr C}} w_C\cdot\varrho(s(\tv)).$$
Note that, to express Min CSP as VCSP, one needs to replace each relation $R$ in a Min CSP instance by a function $\varrho_R$ such that $\varrho_R(\ta)=0$ if $\ta\in R$ and $\varrho_R(\ta)=1$ otherwise.

\subsection{Basic linear program}

Many approximation algorithms for optimization CSPs use the basic (aka standard) linear programming (LP) relaxation~\cite{Dalmau13:robust,Kumar11:strict,Kun12:robust}. 

For any instance $I=(V,A,{\mathscr C})$ of $\VCSP\Gamma$, there is an equivalent canonical 0-1 integer program. It has variables $p_v(a)$ for every $v\in V$, $a\in A$, as well as variables $p_C(\ta)$ for every constraint $C=\varrho(\tv)$ and every tuple $\ta\in A^{\arity(\varrho)}$. The interpretation of $p_v(a)=1$ is that variable $v$ is assigned value $a$; the interpretation of $p_C(\ta)=1$ is that
$\tv$  is assigned (component-wise) tuple $\ta$. More formally, the program ILP is the following:

\begin{align}
& \text{minimize: }  \displaystyle\sum_{C=\varrho(\tv)\in{\mathscr C},\ta\in A^{\arity{\varrho}}} \ \ \ w_C\cdot p_C(\ta)\cdot\varrho(\ta) & & \nonumber \\
&\text{subject to: }  &  &  \nonumber \\
& p_v(A)=1 \hspace{3.8cm} & \mbox{for } v\in V;  & \label{lp1}\\
&   p_C(A^{j-1}\times\{a\}\times A^{\arity(C)-j})=p_{\tv_j}(a) & \mbox{for } C=(\tv,R)\in{\mathscr C}, & \label{lp2} \\
& &  1\leq j\leq \arity(C), a\in A. \nonumber
\end{align}

Here, for every $v\in V$ and $S\subseteq A$, $p_v(S)$ is a shorthand for $\sum_{a\in S} p_v(a)$ and for every $C$ and every $T\subseteq A^{\arity{(C)}}$, $p_C(T)$ is a shorthand for
$\sum_{\ta\in T} p_C(\ta)$.

If we relax this ILP by allowing the variables to take values in the range $[0,1]$ instead of $\{0,1\}$, we obtain the {\em basic linear programming relaxation} for $I$, which we denote by
$\lp(I)$.
As $\Gamma$ is fixed, an optimal solution to $\lp(I)$ can be computed in time polynomial in $|I|$.

For an instance $I$ of $\VCSP\Gamma$, we denote by $\opt(I)$ the value of an optimal solution to $I$, and by $\blpopt(I)$ the value
of an optimal solution to $\lp(I)$.

For any finite set $X$, we shall denote by $\Delta(X)$ the set of all probability distributions on $X$. Furthemore, for any $n\in \mathbb{N}$, we shall denote by $\Delta_{n}(X)$ the subset of $\Delta(X)$ consisting of all $q\in\Delta(X)$ such that $q(x)\cdot n$ is an integer for every $x\in X$. To simplify notation we shall
write $\Delta_{n}$ and $\Delta$ as a shorthand of $\Delta_{n}(A)$ and $\Delta(A)$ respectively.
If $p\in\Delta(A^r)$ and $p_1,\dots,p_r\in\Delta(A)$ will say that {\em the marginals of $p$ are $p_1,\dots,p_r$} to indicate that for every $1\leq i\leq r$, and every $a\in A$, $p(A^{i-1}\times\{a\}\times A^{r-i})=p_i(a)$.

Restriction (\ref{lp1}) of $\lp(I)$ expresses the fact that, for each $v\in V$, $p_v\in \Delta(A)$.  Also, (\ref{lp1}) and ({\ref{lp2}) together express the fact that, for each constraint
$C=\varrho(\tv)$, of arity $k$, we have $p_C\in\Delta(A^{k})$ and that the marginals of the $p_C$ distribution are consistent with the $p_v$ distributions.

Recall that the {\em integrality gap} of BLP for $\VCSP\Gamma$ is defined as
$$
\sup_I{\frac{\opt(I)}{\blpopt(I)}}
$$
where
the supremum is taken over all instances $I$ of $\VCSP\Gamma$. In particular, if the integrality gap is finite, then $\opt(I)=0$
whenever $\blpopt(I)=0$.

Recall that $\eqC$ denotes the binary equality relation. In the following theorem, $\eqC$ will also denote the binary function on $A$ such that
$\eqC(x,y)$ is equal to $0$ if $x=y$ and equal to 1 otherwise. This will not cause any confusion.

\begin{theorem}[\cite{Ene13:local}]
  \label{thm:optLP}%
  Let $\cspLang$ be a constraint language such that
  $\eqC \in \cspLang$ and let $\alpha_{gap}$ be the integality gap of
  \lpName{} for $\VCSP{\cspLang}$.  For every real number $\beta <
  \alpha_{gap}$, it is NP-hard to approximate $\VCSP{\cspLang}$ to within
  a factor $\beta$ unless the UGC is false. In particular, if the integrality
  gap is infinite then there is no constant-factor approximation algorithm for $\VCSP\Gamma$
  unless the UGC is false.
\end{theorem}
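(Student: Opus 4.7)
The plan is to follow the Raghavendra--Steurer paradigm for converting an integrality gap of a canonical convex relaxation into a matching UGC-hardness result, adapted to the VCSP setting. Fix a small $\epsilon>0$ and pick an instance $\cspI_\epsilon$ of $\VCSP\Gamma$ witnessing $\opt(\cspI_\epsilon)\ge(\alpha_{gap}-\epsilon)\cdot\blpopt(\cspI_\epsilon)$; after scaling weights one may assume $\blpopt(\cspI_\epsilon)=1$. Let $\lpSol^{*}$ be an optimal BLP solution, with variable marginals $p^{*}_v$ and constraint marginals $p^{*}_C$. From $\lpSol^{*}$ I build a \emph{dictatorship test} of arity $\ugLabelSize$ whose inputs are functions $f_v:A^{\ugLabelSize}\to A$, one per variable $v$ of $\cspI_\epsilon$: sample a random constraint $C=\varrho(\tv)$ of $\cspI_\epsilon$ with probability proportional to $w_C$, draw $\ugLabelSize$ i.i.d.\ tuples $\ta^{(1)},\ldots,\ta^{(\ugLabelSize)}$ from $p^{*}_C$, and impose $\varrho$ on the resulting coordinatewise evaluations of $f_{v_1},\ldots,f_{v_k}$. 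Crucially, I use $\eqC\in\Gamma$ to insert a small number of $\eqC$-constraints between lightly perturbed pairs of queries; this realises a noise operator $T_\rho$ directly inside the VCSP instance.

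Completeness and soundness of the test are then established. In completeness, any dictator $f_v(x)=x_\ell$ incurs expected cost $E_{\ta\sim p^{*}_C}[\varrho(\ta)]$ per constraint, summing to $\blpopt(\cspI_\epsilon)+O(\epsilon)=1+O(\epsilon)$, since the added $\eqC$-noise contributes only $O(\epsilon)$. In soundness, by the Mossel--O'Donnell--Oleszkiewicz invariance principle applied to the $T_\rho$-smoothed product distribution, any collection of $f_v$'s whose low-degree influences are all $o(1)$ behaves like a randomised integral assignment to $\cspI_\epsilon$ respecting the marginals $p^{*}_v$; its expected cost is therefore at least $\opt(\cspI_\epsilon)\ge\alpha_{gap}-\epsilon$. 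One then composes the test with a Unique Games instance $\ugInstance$ of label set $\ugLabelSize$ and soundness $\ugCompleteness$ in the standard way: attach a long code $f_{\ugLeftVertex}:A^{\ugLabelSize}\to A$ to every vertex $\ugLeftVertex$, and for each edge $(\ugLeftVertex,\ugRightVertex)$ with bijection $\ugPermutation$ apply the test to $f_{\ugLeftVertex}$ and $f_{\ugRightVertex}\circ\ugPermutation$. A dictator labelling derived from a $(1-\ugCompleteness)$-satisfying UG assignment $\ugAssignment$ yields a VCSP assignment of value at most $1+O(\ugCompleteness)$, while any VCSP assignment of value below $\alpha_{gap}-2\epsilon$ must contain many $f_{\ugLeftVertex}$ sharing an ``influential'' coordinate, which an influence-decoding lemma converts into a Unique Games assignment satisfying a non-negligible fraction of edges. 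Choosing $\ugLabelSize$ large and $\ugCompleteness$ small so that $\beta(1+O(\ugCompleteness))<\alpha_{gap}-2\epsilon$, and letting $\epsilon\to 0$, yields NP-hardness of $\beta$-approximation under the UGC. The infinite-gap case then follows by applying the finite case to arbitrarily large $\beta$.

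The main obstacle is the soundness step, i.e.\ the invariance-based analysis that converts low-influence functions on $A^{\ugLabelSize}$ into a feasible integral assignment to $\cspI_\epsilon$ of cost at least $\opt(\cspI_\epsilon)$. Three subtleties appear: the constraints of $\Gamma$ are $[0,1]$-valued and must be treated as soft costs rather than hard predicates, so one works with expected costs in the invariance calculation; the BLP marginal-consistency relations between $p^{*}_v$ and $p^{*}_C$ must be recovered exactly from averaging low-influence functions, which is precisely the point where the $\eqC$-noise is used to smooth the product distribution into something to which the invariance principle applies; and the influence thresholds, the arities of functions in $\Gamma$, and the long-code length $\ugLabelSize$ must be balanced so that the invariance error is truly $o_{\ugCompleteness}(1)$. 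Once these ingredients are assembled carefully the reduction goes through and Theorem~\ref{thm:optLP} follows.
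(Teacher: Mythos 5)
This theorem is cited from Ene, Vondr\'ak, and Wu and is \emph{not} proved in the paper; the paper only adds a one-sentence remark that the per-variable list-constraint assumption in the original reference (where each variable carries its own set of allowed labels) is inessential. So there is no in-paper argument to compare against. Your sketch follows the Raghavendra / Manokaran--Naor--Raghavendra--Schwartz / Kumar--Manokaran--Tulsiani--Vishnoi ``gap instance $\to$ dictatorship test $\to$ compose with Unique Games $\to$ invariance-based soundness'' paradigm, which is indeed the family of techniques that the cited result lives in, and your handling of the infinite-gap case as a limit of the finite case is fine.

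One concrete concern with the details: the role you assign to $\eqC$ (soft equality constraints between noisy copies of a query, to realise $T_\rho$ inside the instance) is not what the hypothesis $\eqC\in\Gamma$ is standardly for in these reductions, and is at odds with what the paper itself points to. In the EVW-style reduction the extra expressiveness (lists, or equivalently $\eqC$ together with pp-definability of the lifted instance) is needed so that the object produced by composing a gap instance with a UG instance and a long code is still a $\VCSP\Gamma$ instance --- e.g.\ to identify/clone variables and to force certain long-code coordinates to specific values when a variable of the gap instance appears in several lifted constraints. If $\eqC$ only ever appears in your reduction as a noise gadget, you have not explained how the rest of the lifted instance stays inside $\Gamma$; for a small $\Gamma$ this can genuinely fail. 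A second, smaller point: the soundness step as written asserts that low-influence $f_v$'s ``behave like a randomised integral assignment respecting the marginals $p^*_v$.'' That conclusion requires a \emph{consistent} rounding across all constraints of $\cspI_\epsilon$, not just constraint-by-constraint closeness, and it is precisely the absence of a global SDP/Gaussian structure (you only have the BLP's local distributions) that makes this step nontrivial in the LP setting; you flag it as the main obstacle, but it is where the real content of the EVW proof sits and your sketch does not yet close it.
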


The setting in~\cite{Ene13:local} assumes that each variable
in an instance comes with its own list of allowed images (i.e. a subset of $A$), but this assumption is not essential in their reduction from the UGC.

\subsection{Algebra}
\label{sec:algebra}

Most of the terminology introduced in this section is standard.
See~\cite{Bulatov??:classifying,Barto17:polymorphisms} for more detail about the algebraic approach to the CSP.
An $n$-ary {\em operation} on $A$ is a map $f:A^n\rightarrow A$.
Let us now define several types of operations that will be used in this paper.
We usually define operations by identities, i.e. by equations where all variables are assumed to be universally quantified.

\begin{itemize}
\item An operation $f$ is {\em idempotent} if it satisfies the identity $f(x,\dots,x)=x$.
\item An operation $f$ is {\em symmetric} if $f(x_1,\dots,x_n)=f(x_{\pi(1)},\dots,x_{\pi(n)})$ for each permutation $\pi$ on $\{1,\dots,n\}$.

Thus, a symmetric operation is one that depends only on the multiset of its arguments. Since there is an obvious one-to-one
correspondence between $\Delta_n(A)$ and multisets of size $n$,
$n$-ary symmetric operations on $A$ can be naturally identified with functions from $\Delta_n(A)$ to $A$.

\item  An $n$-ary operation $f$ on $A$ is {\em totally symmetric} if $f(x_1,\dots,x_n)=f(y_1,\dots,y_n)$ whenever $\{x_1,\dots,x_n\}=\{y_1,\dots,y_n\}$.
\item An $n$-ary $(n\geq 3)$ operation $f$ on $A$ is called an NU (near-unanimity) operation if it satisfies the identities
$$f(y,x,x\dots,x,x)=f(x,y,x\dots x,x)=\cdots=f(x,x,x\dots x,y)=x.$$

\end{itemize}


An $n$-ary operation $f$ on $A$ {\em preserves} (or is a {\em polymorphism} of) a $k$-ary relation $R$ on $A$ if for every $n$ (not necessarily distinct) tuples
$(a^i_1,\dots,a^i_k)\in R$, $1\leq i\leq n$, the tuple
$$(f(a^1_1,\dots,a^n_1),\dots,f(a^1_k,\dots,a^n_k))$$
belongs to $R$ as well. Given a set $\Gamma$ of relations on $A$, we denote by $\pol(\Gamma)$ the set of all operations $f$ such that $f$ preserves each relation in $\Gamma$. If $f\in \pol(\Gamma)$ then $\Gamma$ is said to be {\em invariant} under $f$. If $R$ is a relation we might freely
write  $\pol(R)$ to denote $\pol(\{R\})$.

\begin{example}\label{ex:poly}
Let $A=\{0,1\}$.
\begin{enumerate}
\item It is well known and easy to check that, for each $n\ge 1$, the $n$-ary (totally symmetric) operation $f_n(x_1,\dots,x_n)=\bigwedge_{i=1}^n{x_i}$ is a polymorphism of $\hornsat{3}$.
\item
It is well known and easy to check that, for each $k\ge 2$, constraint language $\ihbs{k}$, as defined in Section~\ref{sec:Intro}, has polymorphism $x\wedge (y\vee z)$, but the operation $x\vee y$ is not a polymorphism of $\ihbs{k}$.
\end{enumerate}
\end{example}


In general, it is well known that the set $\pol(\Gamma)$ of any constraint language $\Gamma$ is a clone, which means that
it contains all projections (i.e. operations of the form $p_n^i(x_1,\ldots,x_n)=x_i$) and is closed
under composition. The latter condition is spelled out as follows: if $f,g_1,\dots,g_n$ are polymorphisms of $\Gamma$ where $f$ is $n$-ary and $g_1,\dots,g_n$ are $m$-ary
then the $m$-ary operation $h(x_1,\dots,x_m)=f(g_1(x_1,\dots,x_m),\dots,g_n(x_1,\dots,x_m))$ is also a polymorphism of $\Gamma$.

The complexity of constant-factor approximation for $\MinCSP\Gamma$ is completely determined by $\pol(\Gamma)$, as the next theorem indicates.

\begin{theorem}[\cite{Dalmau13:robust}]
\label{the:Galois}
Let $\Gamma$ and $\Gamma'$ be constraint languages on $A$ such that $\pol(\Gamma) \subseteq \pol({\Gamma'})$. Assume, in addition, that $\Gamma$ contains the equality relation $\eqC$.
Then, if $\MinCSP\Gamma$ has a constant-factor approximation algorithm then so does $\MinCSP{\Gamma'}$.
\end{theorem}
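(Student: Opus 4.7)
The plan is to exploit the classical Inv--Pol Galois correspondence (Geiger; Bodnarchuk--Kaluzhnin--Kotov--Romov), which says that on a finite set, $\pol(\Gamma) \subseteq \pol(\Gamma')$ is equivalent to every relation $R' \in \Gamma'$ having a primitive-positive (pp-) definition over $\Gamma \cup \{\eqC\}$. Because $\eqC \in \Gamma$ by assumption, this furnishes, for each $R' \in \Gamma'$ of arity $k$, a formula
\[ R'(x_1,\ldots,x_k) \;\equiv\; \exists y_1 \cdots y_m\, \bigwedge_{i=1}^{t(R')} R'_i(\mathbf{z}_i), \]
with each $R'_i \in \Gamma$ and each $\mathbf{z}_i$ a tuple of variables from $\{x_1,\ldots,x_k,y_1,\ldots,y_m\}$. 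Let $T := \max_{R' \in \Gamma'} t(R')$; this is a constant depending only on $\Gamma$ and $\Gamma'$.

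Next I would describe a gadget reduction. Given an input instance $I' = (V',A,{\mathscr C}')$ of $\MinCSP{\Gamma'}$ with constraint weights $\{w_{C'}\}$, I build an instance $I = (V,A,{\mathscr C})$ of $\MinCSP{\Gamma}$ by replacing each constraint $C' = R'(\tv) \in {\mathscr C}'$ with a fresh copy of the pp-definition of $R'$: introduce new variables in $V$ for the existential witnesses, and add each atom $R'_i(\mathbf{z}_i)$ to ${\mathscr C}$ as a constraint of weight $w_{C'}$. Since the pp-definitions have bounded size (depending only on $\Gamma$ and $\Gamma'$), this is a polynomial-time construction.

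The key step is a sandwich between the optima of $I$ and $I'$. For completeness, any $s' : V' \to A$ extends to an assignment $s : V \to A$ by choosing, for each $C'$ that $s'$ satisfies, witnesses that make all $t(R')$ gadget atoms hold, and, for each $C'$ that $s'$ violates, arbitrary witnesses (so that at most $t(R') \le T$ atoms of that gadget are violated). This gives $\opt(I) \le T \cdot \opt(I')$. For soundness, from any $s : V \to A$ the restriction $s' := s|_{V'}$ has the property that whenever $s$ satisfies every atom of the gadget built for $C'$, the values $s(y_1),\ldots,s(y_m)$ certify $s'(\tv) \in R'$. Consequently the total weight of constraints of $I'$ violated by $s'$ is at most the total weight of constraints of $I$ violated by $s$.

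Combining these two bounds, an $\alpha$-approximation algorithm $\alg$ for $\MinCSP{\Gamma}$ yields an $(\alpha T)$-approximation for $\MinCSP{\Gamma'}$: run $\alg$ on $I$ and restrict its output to $V'$ to obtain an assignment whose $I'$-cost is at most $\alpha \cdot \opt(I) \le \alpha T \cdot \opt(I')$. The only nontrivial ingredient is the Galois theorem; the rest is bookkeeping. The hypothesis $\eqC \in \Gamma$ is used precisely to absorb the equality atoms appearing in pp-definitions into the $\Gamma$-constraint language, so that the per-constraint weight accounting used in the soundness step goes through cleanly; an alternative would be to encode those equalities by variable identification, but treating them as explicit constraints keeps the proof uniform.
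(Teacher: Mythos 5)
Your proof is correct and follows the same route as the cited source: the Geiger/BKKR Galois correspondence gives pp-definitions of every $R'\in\Gamma'$ over $\Gamma\cup\{\eqC\}=\Gamma$, and the gadget replacement with weight $w_{C'}$ on each of the (boundedly many) atoms yields the two inequalities $\opt(I)\le T\cdot\opt(I')$ and $\text{cost}_{I'}(s|_{V'})\le\text{cost}_I(s)$, giving an $\alpha T$-approximation. The paper itself only cites~\cite{Dalmau13:robust} for this statement, and your argument is the standard pp-definability reduction used there, with the role of $\eqC\in\Gamma$ correctly identified.
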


We say that BLP {\em decides} $\CSP\Gamma$ if, for any instance $I$ of $\CSP\Gamma$, $I$ is satisfiable whenever $\lpVal(I)=0$.

\begin{theorem}[\cite{Kun12:robust}]
\label{LP-sym}
For any $\Gamma$, the following are equivalent:
\begin{enumerate}
\item $\lp$ decides $\CSP\Gamma$,
\item $\Gamma$ has symmetric polymorphisms of all arities.
\end{enumerate}
\end{theorem}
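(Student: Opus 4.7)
I would prove both directions via direct constructions. For \textbf{(2)$\Rightarrow$(1)}, suppose $\Gamma$ has a symmetric polymorphism of every arity, and let $I$ be an instance of $\CSP\Gamma$ with $\lpVal(I)=0$. First I would invoke the fact that a rational LP with rational data admits a rational optimal vertex, and choose a common denominator $n$ so that $n\cdot p_v(a)\in\mathbb{N}$ for every $v\in V, a\in A$ and $n\cdot p_C(\ta)\in\mathbb{N}$ for every constraint $C$ and tuple $\ta$; equivalently $p_v\in\Delta_n(A)$ and $p_C\in\Delta_n(A^{\arity(C)})$. Now pick a symmetric polymorphism $f\colon A^n\to A$. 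For each variable $v$, let $M_v$ be the multiset over $A$ in which $a$ appears $n\cdot p_v(a)$ times (a multiset of size $n$), and define the assignment $s(v)=f(a_1,\dots,a_n)$ for any enumeration of $M_v$; this is well-defined because $f$ is symmetric.

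The only thing to check is that $s$ satisfies every constraint. Fix $C=(\tv,R)$ with $\tv=(v_1,\dots,v_k)$. Because $\lpVal(I)=0$, the distribution $p_C$ is supported on $R$, so listing the tuples $n\cdot p_C$ gives tuples $\ta^{1},\dots,\ta^{n}\in R$ (with repetition). The marginal constraint (\ref{lp2}) says that for each coordinate $j$, the multiset $\{\!\!\{\ta^{1}_j,\dots,\ta^{n}_j\}\!\!\}$ equals $M_{v_j}$. Hence, by symmetry of $f$,
\[
f(\ta^{1}_j,\dots,\ta^{n}_j)=s(v_j)\quad\text{for every } j=1,\dots,k.
\]
Since $f$ preserves $R$, the tuple $(f(\ta^{1}_1,\dots,\ta^{n}_1),\dots,f(\ta^{1}_k,\dots,\ta^{n}_k))=(s(v_1),\dots,s(v_k))$ lies in $R$, so $s$ satisfies $C$.

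For \textbf{(1)$\Rightarrow$(2)}, fix $n$ and build an instance $I_n$ of $\CSP\Gamma$ whose variables are the $n$-element multisets over $A$ (equivalently, elements of $\Delta_n(A)$). For every $R\in\Gamma$ of arity $k$ and every choice of $n$ (not necessarily distinct) tuples $\ta^{1},\dots,\ta^{n}\in R$, add the constraint $(M_1,\dots,M_k,R)$ where $M_j=\{\!\!\{\ta^{1}_j,\dots,\ta^{n}_j\}\!\!\}$. A feasible LP solution of value $0$ is obtained by setting $p_M(a)=(\text{multiplicity of }a\text{ in }M)/n$ and, for each added constraint, $p_C(\ta)=|\{i:\ta^{i}=\ta\}|/n$; the marginal condition holds by construction and $p_C$ is supported on $R$, so the LP value is $0$. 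By hypothesis, $I_n$ has a satisfying assignment $s$. Define $f\colon A^n\to A$ by $f(a_1,\dots,a_n)=s(\{\!\!\{a_1,\dots,a_n\}\!\!\})$; $f$ is symmetric by definition, and the requirement that $s$ satisfies each constructed constraint is exactly the statement that $f$ preserves each $R\in\Gamma$.

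\textbf{Expected obstacle.} The routine arguments are the two constructions; the only subtle point is the rationality step in (2)$\Rightarrow$(1): one must justify that, although the LP is defined over the reals, a zero-valued feasible solution with rational coordinates exists so that a common denominator can be found. I would resolve this by taking a vertex optimum (all coefficients of the LP are rational), or alternatively by rounding within the affine subspace defined by the tight marginal equalities to produce a rational feasible point of value $0$.
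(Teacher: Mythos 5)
Your proof is correct. The paper does not actually prove this theorem — it is cited from Kun et al.\ — and only informally sketches direction (2)$\Rightarrow$(1) in the paragraph following the statement; your argument for that direction matches that sketch exactly, with the rationality/common-denominator step (take a rational vertex optimum, choose $n$ so that all $p_v\in\Delta_n(A)$ and $p_C\in\Delta_n(A^{\arity(C)})$) correctly made explicit. The only mild care needed is that "$\blpopt(I)=0$ forces $p_C$ supported on $R$" uses that all weights are strictly positive, which the paper's setup guarantees. Your universal-instance construction for (1)$\Rightarrow$(2) — variables are the elements of $\Delta_n(A)$, one constraint per $R\in\Gamma$ and per choice of $n$ tuples in $R$, with an explicit LP solution of value $0$ — is the standard argument (and is essentially the one in the cited reference); it also mirrors in spirit the universal instance $\univ_n(\Gamma)$ that this paper itself builds for Lemma~\ref{le:bounded}.
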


Note that symmetric polymorphisms provide a natural rounding for $\lp(I)$, as follows. Let $s$ be an optimal solution to $\lp(I)$ in which all variables are assigned rational numbers such that, for some $n\in \mathbb{N}$, $p_v\in\Delta_n(A)$ for each variable $v$ in $I$ and $p_C\in\Delta_n(A^{arity(C)})$ for each constraint $C$ in $I$. Then each $v$ can be assigned the element $f(p_v)$
where $f$ is any fixed $n$-ary symmetric polymorphism of $\Gamma$. It is not hard to check (or see~\cite{Kun12:robust}) that if $\blpopt(I)=0$ then this assignment will satisfy all constraints in $I$.

It was claimed in~\cite{Kun12:robust} that the conditions in Theorem~\ref{LP-sym} are also equivalent to the condition of having totally symmetric polymorphisms of all arities, but a flaw was later discovered in the proof of this claim, and indeed a counterexample (see Section~\ref{sec:algorithms}) was found in~\cite{Kun16} (Example 99).

\section{Results}

We will first formally state our main results and then go into detailed proofs.

\subsection{Formal statements of main results}

For any function $\varrho:A^k\rightarrow [0,1]$, let $R_\varrho$ denote the $k$-ary relation $R_\varrho=\{\ta \mid \varrho(\ta)=0\}$.

\begin{theorem}\label{thm:reduction}
Let $\Gamma_1$ be a valued constraint language and let $\Gamma_2=\{R_\varrho \mid \varrho\in \Gamma_1\}$.
Then $\VCSP{\Gamma_1}$ is in $\APX$ if and only if $\MinCSP{\Gamma_2}$ is in $\APX$.
\end{theorem}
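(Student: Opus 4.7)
\textbf{Proof plan for Theorem~\ref{thm:reduction}.} The plan is to use the obvious syntactic translation in each direction, replacing $\varrho$ by $R_\varrho$ (and vice versa) constraint-by-constraint while keeping weights, and then to show that the two objective values are within a constant factor on every assignment. The key finite-combinatorial observation is that, since $\Gamma_1$ is finite and every $\varrho \in \Gamma_1$ has a finite domain $A^{\arity(\varrho)}$, the quantity
\[
\epsilon \;=\; \min\bigl\{\varrho(\ta)\,:\,\varrho\in \Gamma_1,\ \ta\in A^{\arity(\varrho)},\ \varrho(\ta)>0\bigr\}
\]
is a well-defined positive constant depending only on $\Gamma_1$. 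By the very definition of $R_\varrho$, for every $\varrho\in\Gamma_1$ and every tuple $\ta$ we then have the two-sided bound
\[
\epsilon\cdot\Ind[\ta\notin R_\varrho]\;\le\;\varrho(\ta)\;\le\;\Ind[\ta\notin R_\varrho].
\]

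For the forward direction, suppose $\VCSP{\Gamma_1}$ admits an $\alpha$-approximation algorithm. Given a $\MinCSP{\Gamma_2}$-instance $I$, I would produce a $\VCSP{\Gamma_1}$-instance $I'$ on the same variables by replacing each constraint $R_\varrho(\tv)$ (of weight $w_C$) by $\varrho(\tv)$ (of the same weight). Summing the pointwise inequality above over all constraints gives, for every assignment $s$, $\epsilon\cdot W_I(s)\le V_{I'}(s)\le W_I(s)$, where $W_I$ is the Min~CSP objective and $V_{I'}$ the VCSP objective. Taking minima yields $V_{I'}^{*}\le W_I^{*}$; running the approximation algorithm on $I'$ produces $s$ with $V_{I'}(s)\le \alpha\,V_{I'}^{*}$, and then
\[
W_I(s)\;\le\;\epsilon^{-1}V_{I'}(s)\;\le\;(\alpha/\epsilon)\,V_{I'}^{*}\;\le\;(\alpha/\epsilon)\,W_I^{*},
\]
so the composition is an $(\alpha/\epsilon)$-approximation for $\MinCSP{\Gamma_2}$.

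The reverse direction is symmetric: given a $\VCSP{\Gamma_1}$-instance $I$, replace each $\varrho(\tv)$ by $R_\varrho(\tv)$ (same weight) to obtain a $\MinCSP{\Gamma_2}$-instance $I'$. The same two-sided bound now gives $\epsilon\,W_{I'}(s)\le V_I(s)\le W_{I'}(s)$ and hence $W_{I'}^{*}\le \epsilon^{-1}V_I^{*}$; an $\alpha$-approximation for $\MinCSP{\Gamma_2}$ therefore yields an assignment $s$ with $V_I(s)\le W_{I'}(s)\le\alpha W_{I'}^{*}\le (\alpha/\epsilon)V_I^{*}$. The boundary case $V_I^{*}=0$ (equivalently $W_{I'}^{*}=0$) corresponds to a genuine $\CSP$ instance that is fully satisfiable by some assignment, and any reasonable approximation algorithm must return such a satisfying assignment in that case, so $V_I(s)=0$ as required.

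I do not expect any real obstacle here: the only non-syntactic ingredient is the constant $\epsilon>0$ arising from the finiteness of $\Gamma_1$, which is precisely what allows a multiplicative, rather than additive, comparison between $\varrho$ and $\Ind[\cdot\notin R_\varrho]$. The mildest subtlety is treating the zero-optimum case cleanly, which is handled once one recalls that $V^{*}=0 \Longleftrightarrow W^{*}=0$ and that an APX-algorithm is required to be exact on satisfiable instances in both worlds.
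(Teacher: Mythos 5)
Your proof is correct and matches the paper's argument essentially verbatim: both identify the minimum positive value $m$ (your $\epsilon$) of any function in $\Gamma_1$ and observe that the Min~CSP and VCSP objectives of any fixed assignment differ by at most a factor of $1/m$ under the obvious constraint-by-constraint translation. The paper additionally notes upfront that one may assume some $\varrho\in\Gamma_1$ takes a positive value (otherwise both problems are trivial), which is needed for $\epsilon$ to be well-defined; your extra remark about the zero-optimum case is fine but not strictly necessary once the two-sided multiplicative bound is in hand.
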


Hence, for every valued constraint language there is an equivalent (relational) constraint language. Due to this reduction, we can freely focus on Min CSPs. Regarding Min CSPs, we will formulate most of our results for constraint languages $\Gamma$ that contain the equality relation $\eqC$.
We make this restriction because some of the reductions in this paper and some papers that we use are currently known to work only with this restriction. We conjecture that this restriction is not essential, that is, for any $\Gamma$, $\MinCSP\Gamma$ admits a constant-factor approximation algorithm if and only if $\MinCSP{\Gamma\cup\{\eqC\}}$ does so (though the optimal constants may differ).



As mentioned before, for any $\Gamma$, $\CSP\Gamma$ admits a robust algorithm with linear loss if and only if $\MinCSP\Gamma$ has a constant-factor approximation algorithm. For constraint languages $\Gamma$ containing the equality relation $\eqC$, it follows from results in Section~3 of~\cite{Dalmau13:robust} that a complete characterisation of constant-factor approximability of Min CSP reduces to the case
when $\Gamma$ contains all unary singletons, i.e., relations $\{a\}$, $a\in A$. Hence, some statements in the paper will (explicitly) assume this condition. Note that this condition implies that all polymorphisms of $\Gamma$ are idempotent.

Theorem~\ref{thm:optLP} provides evidence, in terms of integrality gap, that the \lpName{} is optimal to design constant-factor
approximation algorithms for $\MinCSP\Gamma$. Our main result is a characterization of problems
$\MinCSP\Gamma$ for which BLP has a finite integrality gap.

For $p,q\in \Delta$, let $\dist(p,q)=\max_{a\in A}{|p(a)-q(a)|}$.
For a tuple $\mathbf{a}\in A^n$, let $d_{\mathbf{a}}\in \Delta_n$ be such that each element $x\in A$ appears in $\mathbf{a}$
exactly $n\cdot d_{\mathbf{a}}(x)$ times. For tuples $\mathbf{a},\mathbf{b}\in A^n$, define $\dist(\mathbf{a},\mathbf{b})=\dist(d_\mathbf{a},d_\mathbf{b})$.
An $n$-ary {\em fractional operation} $\fas$ on $A$ is any probability distribution on the set of $n$-ary operations on $A$.
For every real number $c\geq 0$, call $\fas$ {\em $c$-Lipschitz\footnote{In the conference version we used the terminology {\em stable} instead of {\em Lipschitz}.}} if its support consists of symmetric operations and, for all $\mathbf{a},\mathbf{b}\in A^{n}$, we have
$\Pr_{g\sim \fas}\{g(\mathbf{a})\ne g(\mathbf{b})\}\le c\cdot \dist(\mathbf{a},\mathbf{b})$.

\begin{theorem}
\label{the:Lipschitz}
For any $\Gamma$ containing $\eqC$, the following are equivalent:
\begin{enumerate}
\item The integrality gap of BLP for $\MinCSP\Gamma$ is finite.
\item There is $c\ge 0$ such that, for each $n\in\mathbb{N}$, there is an $n$-ary $c$-Lipschitz fractional operation $\fas_n$ on $A$ whose support consists of symmetric polymorphisms of $\Gamma$.
\end{enumerate}
\end{theorem}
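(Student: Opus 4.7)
The plan is to prove both implications: (2)$\Rightarrow$(1) by LP rounding, and (1)$\Rightarrow$(2) by constructing a parametrised family of BLP instances and applying the minimax theorem.

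For (2)$\Rightarrow$(1), take an optimal rational BLP solution $p$ for an instance $I$ with common denominator $n$, so that every $p_v\in\Delta_n(A)$ and $p_C\in\Delta_n(A^{\arity(C)})$. Sample $g\sim\fas_n$ and round by $s(v)=g(p_v)$, using symmetry of $g$ to view it as a well-defined map on $\Delta_n(A)$. For a constraint $C=R(v_1,\ldots,v_k)$ with $R\ne\emptyset$, realise $p_C$ as a multiset $\mathbf{a}^1,\ldots,\mathbf{a}^n\in A^k$, fix any $\mathbf{a}^\star\in R$, and let $\mathbf{b}_j=(a^1_j,\ldots,a^n_j)$. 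Form $\mathbf{c}_j\in A^n$ by replacing every coordinate $i$ for which $\mathbf{a}^i\notin R$ with $a^\star_j$; every column of the modified matrix lies in $R$, so $(g(\mathbf{c}_1),\ldots,g(\mathbf{c}_k))\in R$ by the polymorphism property. Since $\mathbf{b}_j$ and $\mathbf{c}_j$ disagree in at most $n\cdot(\text{LP cost of }C)/w_C$ positions, $\dist(\mathbf{b}_j,\mathbf{c}_j)\le(\text{LP cost of }C)/w_C$, and the $c$-Lipschitz condition with a union bound over $j\in\{1,\ldots,k\}$ yields $\Pr_{g}\{C\text{ is unsatisfied}\}\le kc\cdot(\text{LP cost of }C)/w_C$. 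Summing over $C$, the expected rounded cost is at most $kc\cdot\lpVal(I)$, where $k$ is the maximum arity in $\Gamma$, so the integrality gap is at most $kc$.

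For (1)$\Rightarrow$(2), fix $n$ and let $\mathcal{P}_n$ be the finite set of $n$-ary symmetric polymorphisms of $\Gamma$. For every probability distribution $\mu$ on $\Delta_n(A)^2$ construct a $\MinCSP\Gamma$ instance $I_\mu$ with variables $\{v_d:d\in\Delta_n(A)\}$ and constraints \textbf{(i)} for every $R\in\Gamma$ of arity $k$ and every $\pi\in\Delta_n(A^k)$ supported on $R$ with marginals $\pi_1,\ldots,\pi_k$, a constraint $R(v_{\pi_1},\ldots,v_{\pi_k})$ of weight $M$, and \textbf{(ii)} for every pair $(d,d')$, a constraint $\eqC(v_d,v_{d'})$ of weight $\mu(d,d')$. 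The natural BLP solution $p_{v_d}=d$, with $p_C=\pi$ for type-(i) constraints and $p_C$ an optimal coupling of $(d,d')$ for type-(ii) constraints, gives $\lpVal(I_\mu)\le\sum_{d,d'}\mu(d,d')\cdot\tfrac12\sum_a|d(a)-d'(a)|\le 1$. If $\alpha$ is the integrality gap from (1) and $M>\alpha$, every integer optimum has cost $<M$, so it must satisfy every type-(i) constraint; its associated symmetric operation $g_\mu$ therefore lies in $\mathcal{P}_n$, and its residual type-(ii) cost satisfies $\sum_{d,d'}\mu(d,d')\,\Ind\{g_\mu(d)\ne g_\mu(d')\}\le c\sum_{d,d'}\mu(d,d')\,\dist(d,d')$ with $c=\alpha|A|/2$, using $\tfrac12\sum_a|d(a)-d'(a)|\le\tfrac{|A|}{2}\dist(d,d')$. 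To conclude, view this as a finite two-person zero-sum game whose minimiser plays $g\in\mathcal{P}_n$, whose maximiser plays $(d,d')\in\Delta_n(A)^2$, and whose payoff is $\Ind\{g(d)\ne g(d')\}-c\cdot\dist(d,d')$: against any mixed strategy $\mu$ of the maximiser the minimiser has a pure response of non-positive expected payoff, so by the minimax theorem there is a mixed strategy $\fas_n$ on $\mathcal{P}_n$ with $\Pr_{g\sim\fas_n}\{g(d)\ne g(d')\}\le c\cdot\dist(d,d')$ for every $(d,d')$, which is a $c$-Lipschitz fractional polymorphism with $c$ independent of $n$.

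The main obstacle is orchestrating the (1)$\Rightarrow$(2) direction: encoding "symmetric polymorphism of $\Gamma$" exactly as the satisfiable part of $I_\mu$, choosing $M$ uniformly in $\mu$ (which works because $\lpVal(I_\mu)\le 1$), and relating the LP cost of a violated $\eqC$-constraint (total variation distance) to $\dist$ so that the Lipschitz constant $c$ depends neither on $n$ nor on $\mu$. Once these points are settled, the minimax step on the finite strategy set $\mathcal{P}_n$ upgrades "one good polymorphism per test distribution" into "one distribution of polymorphisms that is Lipschitz against every pair".
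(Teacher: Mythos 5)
Your proposal is correct, and it takes a genuinely different (and arguably cleaner) route from the paper. The paper factors the theorem through an intermediate notion: Lemma~\ref{le:bounded} relates finite integrality gap to $c$-bounded fractional assignments for the \emph{universal instances} $\univ_n$ via Farkas' Lemma, and Lemma~\ref{le:Lipschitz} then converts $c$-bounded fractional assignments into $c$-Lipschitz fractional operations, handling the polymorphism condition via the $\loss(C)=0$ constraints of $\univ_n$ and the Lipschitz condition via the $\eqC$-constraints. You instead collapse both stages. In the (2)$\Rightarrow$(1) direction you round the BLP solution of an arbitrary instance directly, using the ``replace bad rows by a fixed $\mathbf{a}^\star\in R$'' trick and a union bound over coordinates to get gap at most $Kc$; the paper's version of this argument (inside Lemma~\ref{le:Lipschitz}, (2)$\Rightarrow$(1)) instead renormalizes $p$ onto $R$ and estimates $\dist(p_i,q_i)\le 2\loss(C)$. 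In the (1)$\Rightarrow$(2) direction you replace Farkas' Lemma by the von Neumann minimax theorem on a finite zero-sum game whose minimizer plays $\mathcal{P}_n$ and whose maximizer plays pairs $(d,d')\in\Delta_n(A)^2$, enforcing the polymorphism condition by large-weight ``hard'' type-(i) constraints rather than by the vanishing-loss constraints of $\univ_n$; this is the same duality mathematically but a different organization, and it directly produces the Lipschitz distribution without the intermediate bounded-fractional-assignment object. One small point worth making explicit is that $\mathcal{P}_n\ne\emptyset$ is itself a consequence of the finite-gap hypothesis (via $\opt(I_\mu)\le\alpha<M$), so the game is non-degenerate; you establish this but do not flag it. As in the paper, the $\eqC\in\Gamma$ hypothesis is used only in (1)$\Rightarrow$(2), for the type-(ii) constraints, and your constant $c=\alpha|A|/2$ is independent of $n$, as required.
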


We now give an example of how Theorem~\ref{the:Lipschitz} can be applied to prove negative results. Recall Example~\ref{ex:poly}.  It is known and not hard to check that the operation $f_n$ is the only $n$-ary symmetric polymorphism of $\hornsat{3}$. It follows that there is only one fractional operation of arity $n$, $\fas_n$, whose support consists of symmetric polymorphisms
and that $\Pr_{g\sim \fas_n}\{g(\mathbf{a})\ne g(\mathbf{b})\}=1$ if we choose $\mathbf{a}=(1,1,\ldots,1)$ and $\mathbf{b}=(0,1,\ldots,1)$. Since $\dist(\mathbf{a},\mathbf{b})=1/n$,  it follows that there is no constant $c\ge 0$ satisfying condition (2) of Theorem~\ref{the:Lipschitz}, and hence the integrality gap of $\lp$ for $\MinCSP{\hornsat{3}}$ is infinite (and constant-factor approximation for $\MinCSP{\hornsat{3}}$ is UG-hard).

On the algorithmic side, any sequence $\fas_n$, $n\in\mathbb{N}$, satisfying condition (2) of Theorem \ref{the:Lipschitz} can be used to obtain a
(possibly efficient) randomized rounding procedure for BLP, as follows. As we explained after Theorem~\ref{LP-sym},
if one has an optimal rational solution to $\lp(I)$, one
can use a symmetric operation of appropriate arity $n$ to round this solution to obtain a solution for $I$.
If the symmetric operation is drawn from a $c$-Lipschitz distribution $\fas_n$ on $n$-ary symmetric polymorphisms (such as in Theorem~\ref{the:Lipschitz})
then this procedure is a randomized constant-factor approximation algorithm for $\MinCSP\Gamma$ (this follows from the proof of direction $(2)\Rightarrow (1)$ of Theorem~\ref{the:Lipschitz}).
However it is not entirely clear how to efficiently sample from $\fas_{n}$. In Subsection~\ref{sec:algorithms}, we give
two examples - a class of constraint languages and one specific language - with sequences of Lipschitz distributions that are nice enough to admit efficiently sampling.
The first of these examples, given in Theorem~\ref{the:algorithm}, covers (in a specific sense - see discussion in Subsection~\ref{sec:algorithms}) all problems $\MinCSP\Gamma$ that were previously known to belong to $\APX$, but are not efficiently solvable to optimality.

\begin{theorem}
\label{the:algorithm}
Let $A$ consist of subsets of a set and suppose that $A$ is closed under intersection $\cap$ and union $\cup$. If a constraint language $\Gamma$ on $A$ has polymorphism $x\cap (y\cup z)$
then $\MinCSP\Gamma$ has a constant-factor approximation algorithm. \end{theorem}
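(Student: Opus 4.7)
The plan is to invoke the direction $(2)\Rightarrow(1)$ of Theorem~\ref{the:Lipschitz} together with the algorithmic discussion following it: to prove the conclusion, it suffices to exhibit, for every $n$, a $c$-Lipschitz fractional operation $\fas_n$ supported on symmetric polymorphisms of $\Gamma$ (with $c$ independent of $n$) that is samplable in polynomial time. Since $\eqC$ is preserved by $t(x,y,z)=x\cap(y\cup z)$, we may add $\eqC$ to $\Gamma$ and thus work under the hypotheses of Theorem~\ref{the:Lipschitz}. Writing $A\subseteq 2^X$ for some finite set $X$, my candidates for symmetric polymorphisms are the lattice-theoretic threshold operations
\[
f_n^{(m)}(a_1,\dots,a_n) \;=\; \bigcup_{\substack{S\subseteq[n]\\|S|=m}}\bigcap_{i\in S} a_i \;=\; \{e\in X : |\{i:e\in a_i\}|\ge m\},
\]
which lie in $A$ by closure under $\cap$ and $\cup$, and are manifestly symmetric and idempotent.

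The main algebraic step is to show $f_n^{(m)}\in\pol(\Gamma)$ for every $m$ in a range of the form $(n(1-1/D),\,n]$, where $D$ is a constant depending only on $\Gamma$. The idea is to pass to Boolean via the indicator embedding $\chi:A\hookrightarrow\{0,1\}^X$: the hypothesised polymorphism $t$ becomes coordinate-wise $t_{\mathrm{Bool}}(x,y,z)=x\wedge(y\vee z)$, and $f_n^{(m)}$ becomes coordinate-wise Boolean threshold-$m$. A standard fact from the theory of Boolean clones (Post's lattice) is that each Boolean relation of arity $N$ preserved by $t_{\mathrm{Bool}}$ is a conjunction of Horn $2$-clauses and negative $k$-clauses with $k\le N$ in the same variables; the obstructions that rule out other clause types are that $t_{\mathrm{Bool}}$ fails on positive $2$-clauses $x\vee y$ and on Horn $k$-clauses with $k\ge 3$ and exactly one positive literal. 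A direct pigeonhole argument shows that Boolean threshold-$m$ preserves Horn $2$-clauses for every $m$ and preserves each negative $k$-clause whenever $m>n(k-1)/k$. Setting $D:=|X|\cdot\max_{R\in\Gamma}\arity(R)$ bounds $\arity(\chi(R))\le D$ for every $R\in\Gamma$, so for $m>n(D-1)/D$ the coordinate-wise threshold-$m$ preserves each $\chi(R)$, whence $f_n^{(m)}\in\pol(\Gamma)$.

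Let $\fas_n$ be the uniform distribution on the $\Theta(n/D)$ operations $f_n^{(m)}$ with $m\in(n(1-1/D),n]$. To verify the Lipschitz bound, fix $\ta,\tb\in A^n$ and set $c_{\ta}(e):=|\{i:e\in a_i\}|$ for each $e\in X$. The element $e$ lies in the symmetric difference $f_n^{(m)}(\ta)\triangle f_n^{(m)}(\tb)$ for at most $|c_{\ta}(e)-c_{\tb}(e)|$ values of $m$, giving a per-$e$ probability at most $D\cdot|c_{\ta}(e)-c_{\tb}(e)|/n\le D\cdot|A|\cdot\dist(\ta,\tb)$, using that $|c_{\ta}(e)-c_{\tb}(e)|/n=\bigl|\sum_{a\ni e}(d_{\ta}(a)-d_{\tb}(a))\bigr|\le|A|\cdot\dist(\ta,\tb)$. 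A union bound over $e\in X$ yields a Lipschitz constant of $D\cdot|X|\cdot|A|$, independent of $n$. Sampling $\fas_n$ is trivial, and each $f_n^{(m)}(\ta)$ is computed in time $O(n|X|)$; plugging $\fas_n$ into the rounding scheme discussed after Theorem~\ref{the:Lipschitz} yields a randomized constant-factor approximation algorithm for $\MinCSP\Gamma$.

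The main obstacle is the algebraic characterisation of $\mathrm{Inv}(t_{\mathrm{Bool}})$ invoked above---i.e., that every $N$-ary Boolean relation preserved by $t_{\mathrm{Bool}}$ admits a CNF form of Horn $2$-clauses and negative clauses of arity at most $N$---which one must either extract from the classification of Boolean clones or verify directly by analysing the minimal prime implicants of such a relation. Once that structural fact is in place, the remaining Lipschitz calculation and sampling argument are elementary.
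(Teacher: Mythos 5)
Your proposal uses exactly the same threshold operations as the paper (your $f_n^{(m)}$ is the paper's $g_{m,n}$), essentially the same Lipschitz bound via a per-coordinate union bound over the ground set $X$, and the same rounding and sampling scheme. The genuinely different step --- and the one where your argument has a gap that you yourself flag --- is the proof that $f_n^{(m)}\in\pol(\Gamma)$ for all $m$ in a window $(n(1-1/D),n]$. You pass to Boolean relations via the indicator embedding $\chi:A\hookrightarrow\{0,1\}^X$ and invoke a structural classification of $\operatorname{Inv}(x\wedge(y\vee z))$ as the relations admitting a CNF of implications and negative clauses. That fact is true, but it is a non-trivial input from the classification of Boolean co-clones, and your argument is genuinely incomplete until it is proved or cited.

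The paper sidesteps this entirely in Lemma~\ref{le:jellypoly}. It observes that $x\cap y=x\cap(y\cup y)$ is a polymorphism, then builds $f_{h,n}(x_0,x_1,\dots,x_n)=x_0\cap\bigcup_{|I|=h}\bigcap_{i\in I}x_i$ by composition and distributivity, and finally applies the pigeonhole principle: among $t_1,\dots,t_n\in R$ (with $\arity(R)=r$) some tuple $t$ repeats at least $\lceil n/|A|^r\rceil$ times, so for $h>(1-1/|A|^K)n$ every $h$-subset of indices hits a copy of $t$, giving $g_{h,n}(t_1,\dots,t_n)\subseteq t$ and therefore $g_{h,n}(t_1,\dots,t_n)=f_{h,n}(t,t_1,\dots,t_n)\in R$. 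This direct argument gets the polymorphism property with threshold constant $|A|^K$ in place of your $|X|K$, requires no Post's-lattice input, and (unlike your setup) does not need to adjoin $\eqC$ to $\Gamma$: the direction $(2)\Rightarrow(1)$ of Lemma~\ref{le:Lipschitz}, which is all the rounding analysis actually uses, does not depend on $\eqC$. Modulo filling the structural lemma your route would work, but the paper's composition-plus-pigeonhole argument is cleaner and self-contained.
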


Our last result strengthens UG-hardness to NP-hardness for a class of Min CSPs. A near-unanimity polymorphism (see definition in Subsection~\ref{sec:algebra}) is a type of polymorphism well known in the algebraic theory of CSP~\cite{Barto12:NU,Barto17:polymorphisms,Feder98:monotone}, and its presence is implied by the existence of those Lipschitz distributions (see Subsection~\ref{sec:NU}). We show $\MinCSP\Gamma$ is NP-hard to constant-factor approximate if $\Gamma$ has no near-unanimity polymorphism.

\begin{theorem}
\label{the:NU}
Let $\Gamma$ be a constraint language containing $\eqC$ and all unary singleton relations. If $\MinCSP\Gamma$ admits a constant-factor approximation algorithm then $\Gamma$ has an NU polymorphism, unless $\PT=\NP$.
\end{theorem}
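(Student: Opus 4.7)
The plan is to prove the contrapositive: if $\Gamma$ has no NU polymorphism of any arity, then $\MinCSP\Gamma$ admits no constant-factor approximation unless $\PT=\NP$. The easy sub-case is when $\CSP\Gamma$ is itself $\NP$-hard; then any constant-factor approximation for $\MinCSP\Gamma$ would in fact decide satisfiability (the returned assignment has cost zero iff the instance is satisfiable), so there is nothing more to prove. In what follows I therefore assume $\CSP\Gamma$ is tractable, so that by the CSP dichotomy $\pol(\Gamma)$ contains a Taylor (in particular weak near-unanimity) operation but, by hypothesis, no proper NU operation of any arity.

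Next, I would extract an algebraic ``non-locality'' witness. Because $\Gamma$ contains $\eqC$ and all unary singletons, $\pol(\Gamma)$ is idempotent and pp-definitions from $\Gamma$ are available as reductions that preserve constant-factor approximability by Theorem~\ref{the:Galois}. By a standard Baker--Pixley / free-algebra compactness argument, the absence of an NU polymorphism of arity $n+1$ yields a pp-definable relation $S_n$ together with a tuple $\mathbf{t}\notin S_n$ such that every tuple obtained from $\mathbf{t}$ by altering a single coordinate does lie in $S_n$. This family of ``locally-in, globally-out'' relations is the hardness-producing primitive: the singleton relations allow us to ``pin'' coordinates, while the equality relation allows us to identify variables across copies.

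Finally, I would use the family $\{S_n\}$ to build a gap-preserving reduction from a Min CSP already known to be NP-hard to constant-factor approximate --- the natural target is $\MinCSP{\hornsat{3}}$, shown NP-hard in~\cite{Guruswami16:mixed}; alternatively one could reduce from $\MinCSP{\eqlin}$ via H{\aa}stad's theorem. Each constraint of the source instance is encoded by a bounded-size gadget assembled from copies of a suitable $S_n$ together with equality and singleton constraints, arranged so that (a) every source-satisfying assignment extends to a target assignment of zero (or constant) cost, and (b) every violated source constraint forces at least unit cost inside its gadget. The main obstacle is precisely this gadget construction. The relations $S_n$ encode only a generic non-locality, not the specific combinatorial shape of a Horn or linear constraint, so one must argue at the polymorphism-respecting level of abstraction that an $S_n$ with the stated property can simulate source constraints with blow-up and cost scaling bounded by constants depending only on $\Gamma$. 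Controlling these constants independently of the source instance size --- so that the approximation ratio of any putative algorithm for $\MinCSP\Gamma$ pulls back to a constant-factor approximation for the source problem --- is the delicate heart of the argument.
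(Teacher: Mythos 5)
Your first two steps match the paper's: argue the contrapositive, and use the Baker--Pixley characterization (together with pp-definability and the singletons) to extract, for each $n$, a pp-definable relation $S_n$ with a tuple $\mathbf t\notin S_n$ all of whose one-coordinate alterations lie in $S_n$. But the remainder of your proposal has a genuine gap exactly where you flag it. You propose to build gap-preserving gadgets from the $S_n$ targeting $\MinCSP{\hornsat{3}}$ or linear equations, and say one must ``argue at the polymorphism-respecting level of abstraction'' that $S_n$ can simulate such constraints. That is not a step that can be waved through: the generic ``locally-in, globally-out'' shape of $S_n$ carries no asymmetry resembling a Horn clause (at most one positive literal) and no group structure resembling a linear equation, so there is no reason a bounded-blow-up encoding exists in general.

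The paper avoids this by \emph{refining} $S_n$ before doing any reduction, and by choosing a different source problem. Starting from a sufficiently large $n$ (namely $n\geq (k+2)|A|^2$), it pigeonholes indices, pins the remaining coordinates with singletons, and symmetrizes to obtain a symmetric pp-definable $S$ of arity $k+2$ with $(a,\dots,a)\notin S$ and each one-flip tuple in $S$. Then it invokes the fact that constant-factor approximability forces $\Gamma$ to have \emph{bounded width} (Theorem~9 of~\cite{Dalmau13:robust}, modulo $\PT=\NP$), which yields compatible WNU polymorphisms $g_n$ of \emph{every} arity $n\geq 3$ via~\cite{Kozik14:maltsev}. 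Applying $g_j$ componentwise to the one-flip tuples produces an element $b=g_n(c,a,\dots,a)$ and, after pinning a constant number of coordinates, a pp-definable $R$ satisfying $R\cap\{a,b\}^k=\{a,b\}^k\setminus\{(a,\dots,a)\}$. This is precisely a $k$-ary ``not-all-$a$'' relation on a two-element subdomain, which perfectly encodes a hyperedge. The reduction is therefore from $\independent_k$ (Dinur et al.,~\cite{Dinur05:new}), not from a Min CSP: hyperedges become hard $R$-constraints and the soft constraints are unary $\{a\}$-constraints, so the cost of an assignment counts vertices outside the chosen independent set. Note also that your appeal to ``the CSP dichotomy giving a Taylor / WNU operation'' supplies only a single WNU; the refinement step really needs WNUs of all arities with the compatibility identity $g_n(y,x,\dots,x)=g_{n'}(y,x,\dots,x)$, which is a bounded-width consequence, not a tractability consequence.
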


It is well known and easy to check that $\hornsat{3}$ has no NU polymorphism of any arity, so, by Theorem~\ref{the:NU}, $\MinCSP{\hornsat{3}}$ is NP-hard to constant-factor approximate.

\subsection{Reduction from VCSP to Min CSP}

%

\begin{proof} (of Theorem~\ref{thm:reduction}).
We can assume that some function in $\Gamma_1$ takes a positive value, since otherwise both problems are trivial.
Let $m>0$ denote the minimal positive value taken by any function in $\Gamma_1$. Note, that there is a natural one-to-one correspondence between instances in $\VCSP{\Gamma_1}$ and instances in $\MinCSP{\Gamma_2}$, namely, every instance $I_1=(V,A,{\mathcal C})$ in $\VCSP{\Gamma_1}$ is associated to the instance $I_2$ in $\MinCSP{\Gamma_2}$, obtained by replacing every constraint $(\tv,R_\varrho)$ in $I_1$ by $\varrho(\tv)$. The theorem follows from the observation that the values every assignment $s:V\rightarrow A$ in $I_1$ and $I_2$ are within a multiplicative factor of each other. More precisely, if $v_1$ and $v_2$ are the values of assignment $s$ for intances $I_1$ and $I_2$ respectively, then
$$v_1\leq v_2\leq \frac{v_1}{m}.$$
\end{proof}

\subsection{Finite integrality gaps}

In this subsection we prove Theorem~\ref{the:Lipschitz}. We need a few definitions and intermediate results.

Let $I$ be any weighted instance in $\MinCSP\Gamma$ with variable set $V$. A {\em fractional assignment} for $I$ is any probability distribution, $\fas$, on the set of assignments for $I$. For a real number $c\geq 1$, we say that a fractional assignment $\fas$ for $I$ is {\em $c$-bounded} if, for every constraint $C=(v_1,\dots,v_r,R)$ in $I$,
$$\Pr_{g\sim\fas}\{g(v_1),\dots,g(v_r))\not\in R\}\leq c\cdot(1-\weight{C})$$
where $\weight{C}$ is the weight in $I$ of constraint $C$. We will apply it only to instances where $w_C\in [0,1]$.

For every relation $R\in\Gamma$ of arity, say $r$, and every $p_1,\dots,p_r\in\Delta$ we define $\loss(p_1,\dots,p_r,R)\in[0,1]$ to be $\min_p (1-p(R))$ where $p$ ranges over all the probability distributions on $A^r$ with marginals $p_1,\dots,p_r$.

In a technical sense, the function $\loss$ 'encodes' the contribution of each constraint in optimal solutions of BLP. This is formalized in the following observation.

\begin{observation}
\label{obs:universal}
Let $I$ be any instance of $\MinCSP\Gamma$ and let $C=(v_1,\dots,v_r,R)$ be any of its constraints. Then $1-p_C(R)=\loss(p_{v_1},\dots,p_{v_r},R)$ holds in any optimal solution of $\lpmin(I)$.
\end{observation}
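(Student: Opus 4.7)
The plan is a straightforward local-improvement argument against LP optimality. First I would rewrite the objective at the level of a single constraint: in a MinCSP each relation $R$ corresponds to the cost function $\varrho_R$ that vanishes on $R$ and equals $1$ off $R$, so the contribution of a constraint $C=(v_1,\dots,v_r,R)$ to the LP objective is exactly $w_C\sum_{\ta\notin R}p_C(\ta)=w_C(1-p_C(R))$. Hence $\lpmin(I)$ minimises $\sum_C w_C(1-p_C(R))$ subject to (\ref{lp1}) and (\ref{lp2}).

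One direction is immediate and does not need optimality. In any feasible LP solution, $p_C$ is a probability distribution on $A^{r}$ whose marginals are $p_{v_1},\dots,p_{v_r}$ by (\ref{lp2}), so $p_C$ is one of the candidates over which the minimum defining $\loss(p_{v_1},\dots,p_{v_r},R)$ is taken. Therefore $1-p_C(R)\geq\loss(p_{v_1},\dots,p_{v_r},R)$ for every feasible solution.

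For the reverse inequality I would argue by contradiction. Suppose that in some optimal LP solution there is a constraint $C$ with $1-p_C(R)>\loss(p_{v_1},\dots,p_{v_r},R)$. By definition of $\loss$ there exists a distribution $p^*\in\Delta(A^{r})$ whose marginals are $p_{v_1},\dots,p_{v_r}$ and for which $1-p^*(R)$ attains the loss, in particular $1-p^*(R)<1-p_C(R)$. Form a new LP assignment by replacing $p_C$ with $p^*$ and leaving every other variable untouched. Feasibility still holds: (\ref{lp1}) does not involve $p_C$; the (\ref{lp2}) constraints attached to $C$ remain valid because $p^*$ has the same marginals as $p_C$; and (\ref{lp2}) constraints attached to other constraints do not mention $p_C$. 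Since $w_C>0$, the objective strictly decreases, contradicting optimality.

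I do not anticipate any real obstacle. The only subtle point is the feasibility check after the swap, and that is immediate since the marginal constraints are the only link coupling $p_C$ to the rest of the program. In essence, the observation is a well-definedness remark: BLP always distributes mass tuple-wise inside each constraint in the cheapest way consistent with the per-variable marginals, and the proof is a one-line exchange argument.
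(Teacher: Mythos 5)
Your proof is correct, and it is the natural argument. The paper states this as an \emph{observation} without proof, so there is nothing to compare against; your argument (the easy inequality from feasibility, the reverse inequality by a local swap that preserves all constraints of $\lpmin(I)$ touching $p_C$ because $p^*$ has the same marginals, and strict improvement from $w_C>0$) is exactly the well-definedness remark the authors are implicitly invoking. The only point worth making explicit, which you handle correctly, is that the minimum in the definition of $\loss$ is attained (the set of joint distributions with prescribed marginals is a nonempty compact polytope, e.g.\ it contains the product distribution), so the witness $p^*$ exists.
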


For every $n\in\mathbb{N}$, the $n$-th {\em universal instance} for $\Gamma$, $\univ_{n}(\Gamma)$, is the instance with variable set $\Delta_{n}$ containing for every relation $R$ of arity, say $r$, in $\Gamma$, and every $p_1,\dots,p_r\in \Delta_{n}$, constraint $(p_1,\dots,p_r,R)$ with weight $1-\loss(p_1,\dots,p_r,R)$. We write simply $\univ_{n}$ if $\Gamma$ is clear from the context.


The following is a variant of Farkas' lemma 
that we will use in our proofs.

\begin{lemma}(Farkas' Lemma)
\label{farkas}
Let $M$ be a $m\times n$ matrix, $b\in \mathbb{R}^m$. Then exactly one of the following two statements is true:
\begin{enumerate}
\item There is an $x\in (\mathbb{R}_{\geq 0})^n$ with $\|x\|_1=1$ ($\|x\|_1$ denotes the $1$-norm of $x$) such that $Mx\leq b$.
\item There is a $y\in (\mathbb{R}_{\geq 0})^m$ with $\|y\|_1=1$ such that $y^Tb<y^TM$ (i.e. each coordinate of $y^TM$ is strictly greater than $y^Tb$).
\end{enumerate}
\end{lemma}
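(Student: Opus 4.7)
The plan is to treat this as a standard convex separation argument, since it is essentially Farkas' lemma reformulated with the probability-simplex constraints $\|x\|_1=1$ and $\|y\|_1=1$ and with strict inequalities. The two alternatives correspond to: (1) the convex hull of the columns of $M$ meets the "box" $\{z \le b\}$, versus (2) a strict separating hyperplane exists between those two convex sets.

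First I would check that (1) and (2) cannot both hold. Suppose $x,y$ witness both. Since $y \ge 0$ and $Mx \le b$, we have $y^{T}Mx \le y^{T}b$. On the other hand, writing $M_1,\dots,M_n$ for the columns of $M$, alternative (2) gives $y^{T}M_j > y^{T}b$ for every $j$, so
\[
y^{T}Mx \;=\; \sum_{j=1}^{n} x_j\,(y^{T}M_j) \;>\; \Bigl(\sum_{j=1}^{n} x_j\Bigr)\,y^{T}b \;=\; \|x\|_1\,y^{T}b \;=\; y^{T}b,
\]
a contradiction.

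For the main direction, assume (1) fails and define
\[
S \;=\; \bigl\{Mx : x \in (\mathbb{R}_{\ge 0})^{n},\ \|x\|_1 = 1\bigr\} \;=\; \operatorname{conv}\{M_1,\dots,M_n\},\qquad D \;=\; \{z \in \mathbb{R}^{m} : z \le b\}.
\]
Then $S$ is a nonempty compact convex set (image of the simplex under a linear map) and $D$ is a nonempty closed convex set, and the failure of (1) means $S \cap D = \emptyset$. The strict separating hyperplane theorem (for a compact convex set and a disjoint closed convex set) yields a nonzero $y \in \mathbb{R}^{m}$ and a scalar $c \in \mathbb{R}$ with $y^{T}s > c$ for every $s \in S$ and $y^{T}d \le c$ for every $d \in D$. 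Since $D = b - (\mathbb{R}_{\ge 0})^{m}$, boundedness of $\sup_{d \in D} y^{T}d$ forces $y \ge 0$, and then that supremum equals $y^{T}b$, so $y^{T}b \le c$. Combining gives $y^{T}M_j > c \ge y^{T}b$ for each column $M_j$, i.e.\ each coordinate of $y^{T}M$ is strictly greater than $y^{T}b$. Finally, $y \ne 0$ together with $y \ge 0$ means $\|y\|_1 > 0$, and rescaling $y$ by $\|y\|_1^{-1}$ preserves both inequalities (they are homogeneous in $y$), giving the required $y \in (\mathbb{R}_{\ge 0})^{m}$ with $\|y\|_1 = 1$.

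The only place that needs care is ensuring the separation can be taken strict and that the separator is nonzero; this is exactly where compactness of $S$ is used (separation of a compact convex set from a disjoint closed convex set always admits a strict separator with nonzero normal). Everything else is routine: verifying $y \ge 0$ by ruling out an unbounded supremum on $D$, and normalizing by $\|y\|_1$. I do not expect any real obstacle; the statement is a repackaging of the classical Farkas alternative with a normalization to probability distributions, and the proof follows the same two-line convex-analytic template.
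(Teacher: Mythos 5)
Your proof is correct, but it takes a genuinely different route from the paper's. The paper encodes the simplex constraint $\|x\|_1 = 1$ as two extra inequality rows (namely $\sum_i x_i \le 1$ and $\sum_i -x_i \le -1$) appended to $M$ and $b$, forming $M'$ and $b'$, and then invokes a standard Farkas variant (Corollary 7.1f in Schrijver's book) for the system $M'x \le b'$, $x \ge 0$; the resulting alternative vector $z \in \mathbb{R}^{m+2}_{\ge 0}$ is then massaged back into the form of condition (2), with the two extra coordinates absorbing the threshold between $y^{T}b$ and $\min_j y^{T}M_j$. You instead argue geometrically from first principles: you identify the failure of (1) with the disjointness of the compact convex set $S = \operatorname{conv}\{M_1,\dots,M_n\}$ from the closed convex set $D = \{z \le b\}$, apply the strict separating hyperplane theorem, deduce $y \ge 0$ from boundedness of $y^{T}$ on $D$, observe that the supremum of $y^{T}$ over $D$ is then exactly $y^{T}b$, and normalize. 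Both proofs are sound. Your approach is self-contained modulo the separation theorem and makes the geometry transparent — in particular it makes clear exactly where the strictness of the inequality in (2) comes from (compactness of $S$). The paper's approach is shorter on the page but outsources the real work to a textbook corollary whose precise form the reader must look up; it also requires the small auxiliary check that the two slack coordinates of $z$ can be eliminated, which the paper dismisses as "easy to see."
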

\begin{proof}
Condition (1) is equivalent to the existence of $x\in (\mathbb{R}_{\geq 0})^n$ such that $M'x\leq b'$ where
$M'$ and $b'$ are obtained by adding two extra rows to $M$ and $b$ expressing that
that $\sum_{1\leq i\leq n} x_i\leq 1$ and $\sum_{1\leq i\leq n} -x_i\leq -1$.  It then follows from Corollary 7.1f in~\cite{Schrijver86:book} that the negation of condition (1) is equivalent to the existence of a
vector $z\in (\mathbb{R}_{\geq 0})^{(m+2)}$ satisfying $z^TM'\geq 0$ and $z^T b'<0$. It is easy to see that this is equivalent to condition (2).
\end{proof}

Theorem \ref{the:Lipschitz} follows directly from Lemmas \ref{le:bounded} and \ref{le:Lipschitz} below.

\begin{lemma}
\label{le:bounded}
For every constraint language $\Gamma$ and $c\geq 1$, the following are equivalent:
\begin{enumerate}
\item The integrality gap of BLP for $\MinCSP\Gamma$ is at most $c$.
\item For each $n\in\mathbb{N}$, there is a $c$-bounded fractional assignment for $\univ_{n}$.
\end{enumerate}
\end{lemma}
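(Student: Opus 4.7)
The plan is to prove the two directions of the equivalence separately.

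For $(2)\Rightarrow(1)$, I would use the fractional assignments from (2) to round BLP solutions. Take an arbitrary instance $I$ of $\MinCSP\Gamma$ and an optimal (rational) BLP solution $\{p_v\}_v,\{p_C\}_C$; choose $n\in\mathbb{N}$ so that $p_v\in\Delta_n$ for every variable $v$. Let $\fas$ be a $c$-bounded fractional assignment for $\univ_n$ supplied by (2), and define a fractional assignment $\fas'$ for $I$ by sampling $g\sim\fas$ and outputting $s(v):=g(p_v)$. For every constraint $C=(v_1,\dots,v_r,R)$ of $I$, the probability that $s$ violates $C$ equals the probability that $g$ violates the corresponding constraint $(p_{v_1},\dots,p_{v_r},R)$ of $\univ_n$, whose weight is $1-\loss(p_{v_1},\dots,p_{v_r},R)$; by $c$-boundedness this is at most $c\cdot\loss(p_{v_1},\dots,p_{v_r},R)$, which in turn equals $c\cdot(1-p_C(R))$ by Observation~\ref{obs:universal}. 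A weighted summation then yields expected cost of $\fas'$ at most $c\cdot\blpopt(I)$, and so $\opt(I)\leq c\cdot\blpopt(I)$.

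For $(1)\Rightarrow(2)$, I would argue contrapositively. Suppose that for some $n$ there is no $c$-bounded fractional assignment for $\univ_n$. The existence of one is encoded by a finite linear system $Mx\leq b$, $x\geq 0$, $\|x\|_1=1$, whose columns are indexed by assignments $g:\Delta_n\to A$, whose rows are indexed by constraints $C=(p_1,\dots,p_r,R)$ of $\univ_n$, with $M_{C,g}=1$ iff $g$ violates $C$ and $b_C=c\cdot\loss(p_1,\dots,p_r,R)$. Applying Lemma~\ref{farkas} produces nonnegative rationals $y_C$ with $\sum_C y_C=1$ such that, for every $g:\Delta_n\to A$,
$$\sum_{C:\, g\text{ violates }C} y_C \;>\; c\cdot\sum_C y_C\cdot\loss(p_1,\dots,p_r,R).$$
From this dual certificate I would construct an instance $I$ of $\MinCSP\Gamma$ on variable set $\Delta_n$ containing, for each $C=(p_1,\dots,p_r,R)$ of $\univ_n$ with $y_C>0$, a constraint with that scope and relation and weight $y_C$. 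Setting $p_v:=v$ for $v\in\Delta_n$ and each $p_C$ to a minimizer in the definition of $\loss$ yields a feasible BLP solution of value $\sum_C y_C\cdot\loss(p_1,\dots,p_r,R)$, whereas $\opt(I)=\min_{g:\Delta_n\to A}\sum_{C:\, g\text{ violates }C} y_C$, which by the displayed inequality is strictly greater than $c\cdot\blpopt(I)$. This contradicts (1).

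The main obstacle is the Farkas setup: arranging $M$ and $b$ so that the strict separating inequality produced by Lemma~\ref{farkas} matches precisely the definitions of $c$-boundedness and of the \lpName{} integrality gap, and then handling the degenerate possibility $\blpopt(I)=0$ (in which case the strict inequality gives $\opt(I)>0$, so the gap is actually infinite and in particular exceeds $c$). A minor technical point is that rationality of $M$ and $b$ guarantees a rational Farkas witness $y$, so the weights $y_C$ are legitimate Min CSP weights in $\mathbb{Q}_{>0}$ after dropping those with $y_C=0$.
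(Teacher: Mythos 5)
Your proposal is correct and follows essentially the same route as the paper: the forward direction rounds an optimal rational BLP solution through the $c$-bounded fractional assignment for $\univ_n$ using Observation~\ref{obs:universal}, and the converse direction applies Farkas' Lemma (Lemma~\ref{farkas}) to the infeasible linear system to extract dual weights $y_C$ that are then used to build an instance on variable set $\Delta_n$ witnessing a gap exceeding $c$. Your explicit handling of the degenerate case $\blpopt(I)=0$ and of dropping zero-weight constraints are minor refinements of points the paper leaves implicit, but the argument is the same.
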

\begin{proof}
This proof is an adaptation of the proof of Theorem 1 in \cite{Kolmogorov15:power}, and it also works for valued CSPs.

$(2\Rightarrow 1)$ Let $I=(V,A,{\mathscr C})$ be any instance of $\MinCSP\Gamma$, and let $p_v(v\in V)$, $p_C(C\in{\mathscr C})$ by any optimal solution of $\lpmin(I)$. We can assume that there exists $n\in\mathbb{N}$ such that $p_v\in\Delta_{n}$ for every $v\in V$. For every assignment $g$ for $\univ_{n}$, let $s_g$ be the assignment for $I$ defined as $s_g(v)=g(p_v), v\in V$.

Since (2) holds, it follows from Observation \ref{obs:universal} and the definition of $c$-boundedness that, for every constraint
$C=(v_1,\dots,v_r,R)$ in $I$, we have
$$\Pr_{g\sim\fas}\{(s_g(v_1),\dots,s_g(v_r))\not\in R\}\leq c\cdot (1-p_C(R))$$
It follows that the expected value of $s_g$ is at most $c\cdot \blpopt(I)$. Consequently, there exists some $s_g$ with value at most $c\cdot \blpopt(I)$.

$(1\Rightarrow 2)$ We shall prove the contrapositive. Assume that for some $n\in\mathbb{N}$, there is no $c$-bounded fractional assignment for $\univ_{n}$. We shall write a system of linear inequalities that expresses the existence of a $c$-bounded fractional assignment for $\univ_{n}$ and then apply Lemma~\ref{farkas} to this system. To this end, we introduce a variable $x_g$ for every assignment $g$ for $\univ_{n}$. The system contains, for every constraint $C=(p_1,\dots,p_r,R)$ in $\univ_{n}$, the inequality:

$$\sum_{g\in G_{n}}  x_g \cdot \Ind[(g(p_1),\dots,g(p_r))\not\in R]\leq c\cdot \loss(C)$$
where $G_{n}$ is the set of all assignments for $\univ_{n}$ and $\Ind[(g(p_1),\dots,g(p_r))\not\in R]$ is $1$ if $g(p_1),\dots,g(p_r))\not\in R$
and $0$ otherwise.
Note that the system does not include equations for $x_g\geq 0$ and $\sum_{g\in G_{n}} x_g=1$ since this is already built-in in the version of Farkas' lemma that we use.

Since there is no $c$-bounded fractional assignment for $\univ_{n}$ it follows from Farkas' Lemma that the system containing for every $g\in G_{n}$ inequality
\begin{equation}\label{eq:ineq}
\sum_{C=(p_1,\dots,p_r,R)\in\univ_{n}} y_C \cdot c\cdot \loss(C)<\sum_{C=R(p_1,\dots,p_r,R)\in\univ_{n}} y_C \cdot \Ind[(g(p_1),\dots,g(p_r))\not\in R]
\end{equation}

has a solution where every variable $y_C$ takes non-negative values and it holds that $\sum_{C} y_C=1$. We can also assume the value of every variable in the solution is rational, since so are all the coefficients in the system.

Now consider instance $I=(V,A,{\mathscr C})$ where $V=\Delta_{n}$ and ${\mathscr C}$ contains, for every relation $R\in\Gamma$ of arity, say $r$, and every $p_1,\dots,p_r\in \Delta_{n}$, constraint $C=(v_1,\dots,v_r,R)$ with weight $y_C$.

We shall construct a solution $p_v (v\in\Delta_{n}), p_C (C\in{\mathscr C})$ of $\lpmin(I)$. For every $v\in\Delta_{n}$, set $p_v$ to $v$ (note that $v$ is a distribution on $A$). For every $C\in{\mathscr C}$ set $p_C$ to the distribution $q$ with $1-q(C)=\loss(C)$. Hence, the objective value of the solution of $\lpmin(I)$ thus constructed
is  $\sum_{C\in\univ_{n}} y_C \cdot  \loss(C)$, which is $c$ times smaller than the left side of inequality \eqref{eq:ineq}. Furthermore, the total weight of falsified constrains by any assignment $g$ for $I$ is precisely the right side of inequality \eqref{eq:ineq}. It follows that the gap of instance $I$ is larger than $c$.
\end{proof}




For every set $X$, one can associate to every $p\in\Delta_{n}(X)$  the multiset $p'$ such every element $x\in X$ occurs in $p'$ exactly $p(x)\cdot n$ times. In a similar way, one obtains a one-to-one correspondence between the assignments (resp. fractional assignments) for $\univ_{n}$ and the $n$-ary symmetric operations (resp. fractional operations with support consisting of $n$-ary symmetric operations).

\begin{lemma}
\label{le:Lipschitz}
For every constraint language $\Gamma$ containing the equality relation $\eqC$, the following are equivalent:
\begin{enumerate}
\item There is $c\geq 1$, such that for each $n\in\mathbb{N}$, there is a $c$-bounded fractional assignment for $\univ_{n}(\Gamma)$.
\item There is $d\geq 0$ such that, for each $n\in\mathbb{N}$, there is an $n$-ary $d$-Lipschitz fractional operation on $A$ whose support consists of symmetric polymorphisms of $\Gamma$.
\end{enumerate}
\end{lemma}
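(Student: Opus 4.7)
Both directions rest on the bijection described immediately before the lemma: an assignment $g:\Delta_n\to A$ for $\univ_n$ is the same datum as an $n$-ary symmetric operation $f_g:A^n\to A$, and hence fractional assignments for $\univ_n$ correspond to fractional operations whose support consists of $n$-ary symmetric operations. I will use this correspondence silently. For $(1)\Rightarrow(2)$, suppose $\fas'_n$ is $c$-bounded and let $\fas_n$ be the corresponding fractional operation. For any $R\in\Gamma$ of arity $r$ and any $t_1,\dots,t_n\in R$ (with repetition), taking $p_j\in\Delta_n$ to be the empirical distribution of the $j$-th column gives $\loss(p_1,\dots,p_r,R)=0$, certified by the joint empirical distribution of $\{t_1,\dots,t_n\}$; the associated constraint in $\univ_n$ has weight $1$, so $c$-boundedness forces every $g$ in the support to satisfy it, i.e.\ $f_g$ is a polymorphism of $R$. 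For the Lipschitz bound I use $\eqC\in\Gamma$: the usual maximum-coupling calculation gives $\loss(p,p',\eqC)=1-\sum_a\min(p(a),p'(a))\le\tfrac{|A|}{2}\dist(p,p')$, so applying $c$-boundedness to $(p,p',\eqC)$ yields $\Pr_{g\sim\fas'_n}\{g(p)\neq g(p')\}\le\tfrac{c|A|}{2}\dist(p,p')$ and $\fas_n$ is $(c|A|/2)$-Lipschitz.

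For the harder direction $(2)\Rightarrow(1)$, fix $n$ and choose an integer $K$ depending only on $\Gamma$ that is large enough that, for every $R\in\Gamma$ and every $p_1,\dots,p_{\arity(R)}\in\Delta_n$, some optimal vertex of the LP defining $\loss(p_1,\dots,p_{\arity(R)},R)$ lies in $\Delta_N$ where $N:=nK$; this is possible because the marginal-constraint matrix is $0/1$ with dimensions bounded in terms of $|A|$ and the maximum arity of $\Gamma$, so Hadamard's bound controls all basic subdeterminants by a constant independent of $n$. From $\fas_N$ (which is $d$-Lipschitz) I obtain, via the bijection, a distribution over maps $g':\Delta_N\to A$; restricting each such map to $\Delta_n\subseteq\Delta_N$ defines the candidate $\fas'_n$. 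To verify $c$-boundedness, fix $C=(p_1,\dots,p_r,R)\in\univ_n$ with $\loss(C)=\ell$; the case $R=\emptyset$ is trivial, so fix $t_0\in R$. Pick the vertex $q\in\Delta_N$ above and view it as a multiset of $N$ tuples in $A^r$, of which exactly $(1-\ell)N$ lie in $R$ (an integer because $q\in\Delta_N$). Replacing the $\ell N$ tuples outside $R$ by copies of $t_0$ produces a new multiset of $N$ tuples all lying in $R$ whose column marginals $p'_1,\dots,p'_r\in\Delta_N$ satisfy $p'_j(a)-p_j(a)=\ell\bigl(\Ind[a=t_0[j]]-p^{(\bar R)}_j(a)\bigr)$, where $p^{(\bar R)}_j$ is the column-$j$ empirical of the replaced tuples; hence $\dist(p_j,p'_j)\le\ell$, and this very multiset certifies $\loss(p'_1,\dots,p'_r,R)=0$ in $\univ_N$.

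Since every $f$ in the support of $\fas_N$ is a polymorphism, applying it to the multiset witness gives $(g'(p'_1),\dots,g'(p'_r))\in R$; hence a violation of $C$ by $g=g'|_{\Delta_n}$ forces $g'(p_j)\neq g'(p'_j)$ for some $j$. The $d$-Lipschitz property of $\fas_N$ together with a union bound then yields $\Pr\{g\text{ violates }C\}\le\sum_{j=1}^{r} d\cdot\dist(p_j,p'_j)\le r_{\max}\,d\,\ell$, where $r_{\max}$ is the maximum arity of any relation in $\Gamma$, so $c:=\max(1,r_{\max}d)$ (a constant independent of $n$) gives the required $c$-boundedness. The main obstacle, and the reason this direction is considerably more delicate than the converse, is the rounding step: already in three dimensions the transportation polytope can have fractional vertices, so a $\loss=0$ constraint over $\Delta_n$ need not be witnessed by $n$ tuples in $R$; lifting to $\Delta_N$ for a constant blow-up factor $K$ and exploiting the padding trick with a fixed $t_0\in R$ is precisely what buys both the exact realisability of the nearby constraint and the $O(\ell)$ perturbation bound needed for the Lipschitz step.
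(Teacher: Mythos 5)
Your proof is correct, and the overall architecture coincides with the paper's: the assignment/symmetric-operation correspondence, using weight-$1$ constraints of $\univ_n$ to extract polymorphisms, using $\eqC$-constraints to extract a Lipschitz bound, and, for $(2)\Rightarrow(1)$, replacing the optimal LP certificate for a constraint by a nearby certificate that is \emph{exactly} realisable by tuples in $R$ and then applying a Lipschitz union bound. Where you genuinely diverge is in the mechanics of the perturbation step. The paper takes the conditional distribution $q(t)=p(t)/p(R)$ on $R$, needs the case split $\loss(C)\le 1/2$ to get $\dist(p_i,q_i)\le 2\loss(C)$, and guarantees $q\in\Delta_{n'}(A^r)$ by choosing $n'$ as a multiple of $n$ large enough to clear the denominators of the (finitely many) conditioned distributions arising from $\univ_n$; this gives $c=2Kd$ with $K$ the maximum arity. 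You instead take an LP \emph{vertex} $q$, whose denominator is controlled by a Hadamard-type constant depending only on $|A|$ and the maximum arity, so the lift $\Delta_n\hookrightarrow\Delta_{nK}$ has a blow-up factor $K$ fixed once and for all; you then pad the $\ell N$ out-of-$R$ tuples with a fixed $t_0\in R$, which gives the tighter and case-split-free estimate $\dist(p_j,p'_j)\le\loss(C)$ and the constant $c=\max(1,r_{\max}d)$. The padding trick and vertex argument buy a cleaner perturbation bound and an explicit, constraint-independent scaling factor; the paper's conditioning is more elementary (no appeal to LP vertex structure or subdeterminant bounds) at the cost of the lcm step and a factor of $2$.
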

\begin{proof}
\newcommand{\assoc}{\approx}
In this proof it is convenient to distinguish formally between a distribution $y$ (resp. assignment, fractional assignment)
and its associated multiset $y$ (resp. operation, fractional operation)  that, whenever $X$ and $n$ are clear from the context, we shall denote by $y'$. The following observation will be  useful.
\begin{observation}
\label{obs:multiset}
For any assignment $g$ for $\univ_{n}$ and any distribution $p\in \Delta_{n}(A^r)$ we have that
$(g(p_1),\dots,g(p_r))=g'(p')$ where $p_1,\dots,p_r\in \Delta_n(A)$ are the marginals of $p$ and $g'(p')$ denotes the $r$-ary tuple obtained by applying the symmetric $n$-ary operation $g'$ (corresponding to $g$) to the $n$ tuples in $p'$ component-wise.
\end{observation}

$(1)\Rightarrow(2)$ Assume that $\fas$ is a $c$-bounded fractional assignment for $\univ_{n}$. We claim that for every mapping $g$ in the support of $\fas$, $g'$ is, in fact, a polymorphism of $\Gamma$. Indeed, let $R$ be any relation of arity, say $r$, in $\Gamma$, let $t_1,\dots,t_{n}\in R$. We want to show that $g'(t_1,\dots,t_n)\in R$ where
$g'(t_1,\dots,t_n)$ denotes the $r$-ary tuple obtained by applying $g'$ to $t_1,\dots,t_n$ component-wise.

Let $p\in\Delta_{n}(A^r)$ be the distribution associated to
multiset $p'=[t_1,\dots,t_{n}]$ and consider constraint $C=(p_1,\dots,p_r,R)$ on $\univ_{n}$ where $p_1,\dots,p_r$ are the marginals of $p$. By the choice of $p$ we have $p(R)=1$. Since $\fas$ is $c$-bounded it follows that $\Prob_{g\sim\fas}\{(g(p_1),\dots,g(p_r))\not\in R\}\leq c\cdot \loss(C)\leq 1-p(R)=0$. Hence, $(g(p_1),\dots,g(p_r))\in R$ for every $g$ in the support of $\fas$. It follows from Observation~\ref{obs:multiset} that $g'(t_1,\dots,t_n)=(g(p_1),\dots,g(p_r))$ and we are done.

We have just seen that the support of the fractional $n$-ary operation, $\fas'$, associated to $\fas$ consists of polymorphisms of $\Gamma$. Since, by definition, the support of $\fas'$ only contains symmetric operations, in order to complete the proof it suffices to show that $\fas'$ is  $(c\cdot |A|)$-Lipschitz.

Let $p'_1,p'_2\in A^{n}$ and consider constraint $C=(p_1,p_2,\eqC)$ in $\univ_{n}$ where $p_1,p_2\in\Delta_{n}$ are the distributions associated to $p'_1$ and $p'_2$ respectively and $\eqC$ is the equality relation on $A$. It is not too difficult to find a distribution $p$ on $A^2$ with marginals $p_1$ and $p_2$ such that $1-p(\eqC)\leq |A|\cdot \dist(p'_1,p'_2)$. A concrete example can be obtained as follows. For every $a\in A$, let $a_1=\max\{p_1(a)-p_2(a),0\}$,  and $a_2=\max\{p_2(a)-p_1(a),0\}$. Also, let $s=\sum_a a_1=\sum_a a_2$. Then we define $p$ as follows:
$$p(a,b)= \begin{cases} \min\{p_1(a),p_2(b)\} & \mbox{ if } a=b \\
                       \frac{a_1\cdot b_2}{s} & \mbox{ if } a\neq b
\end{cases}$$
It is easy to verify that $p$ satisfies the desired conditions.
Finally, we have
$$\Pr_{g'\sim\fas'}\{g'(p'_1)\neq g'(p'_2)\}=\Pr_{g\sim\fas}\{(g(p_1),g(p_2))\not\in\eqC\}\leq c\cdot\loss(C)\leq c\cdot |A|\cdot \dist(p'_1,p'_2).$$

We note that this is the only part where the condition $\eqC\in\Gamma$ is required.

$(2)\Rightarrow(1)$.
For every $n\in\mathbb{N}$, let $n'$ be a multiple of $n$ to be fixed later,  let $\fas'$ be a $d$-Lipschitz fractional operation of arity $n'$ whose support consists of symmetric polymorphisms of $\Gamma$, and let $\fas$ be its associated fractional assignment for $\univ_{n'}$. We can without loss of generality assume that $d\ge 1$. We shall prove later that, for every constraint
$C=(p_1,\dots,p_r,R)$ in $\univ_{n}$ (note, not in $\univ_{n'}$), we have
\begin{equation}\label{eq:stob}\Pr_{g\sim\fas}\{(g(p_1),\dots,g(p_r))\not\in R\}\leq 2\cdot r\cdot d\cdot \loss(C)\end{equation}

Consider now the fractional assignment $\fas^*$ on $\univ_{n}$ where for every assignment $f$ on $\univ_{n}$,  $\fas^*(f)=\sum_g \fas(g)$ where $g$ ranges over all assignments for $\univ_{n'}$ that {\em extend f} (that is, such that
$f(p)=g(p)$ for every $p\in\Delta_{n}$). It follows from the definition $\fas^*$
that
$$\Pr_{f\sim\fas^*}\{(f(p_1),\dots,f(p_r))\not\in R\}=\Pr_{g\sim\fas}\{(g(p_1),\dots,g(p_r))\not\in R\}$$
for every constraint $(p_1,\dots,p_r,R)$ in $\univ_{n}$. This gives a way to construct, for every $n\in\mathbb{N}$, a $(2\cdot K\cdot d)$-bounded fractional assignment for $\univ_{n}$
where $K$ is the maximum arity of a relation in $\Gamma$.

To finish the proof it only remains to prove inequality \eqref{eq:stob} for any constraint $C=(p_1,\dots,p_r,R)$ in $\univ_{n}$.  Let $p$ be a distribution on $A^r$ such that $1-p(R)=\loss(C)$ is achieved. We can assume that $\loss(C)\leq 1/2$ since otherwise there is nothing to prove.

Note that we can assume that $p(t)$ is rational for every $t\in A^r$. Let $q$ be the distribution on $A^r$ defined as
$$q(t)=\begin{cases}
p(t)/p(R) & t\in R \\
0                   & t\not\in R
\end{cases}$$
Consider constraint $(q_1,\dots,q_r,R)$ where $q_1,\dots,q_r$ are the marginals of $q$. Since the number of constraints in $\univ_{n}$ is finite we can assume that
$n'$ has been picked such that  $q\in \Delta_{n'}(A^r)$. We claim that  $(g(q_1),\dots,g(q_r))\in R$ for any $g$ in the support of $\fas$. Indeed, if $q'=[t_1,\dots,t_{n'}]$ is the multiset of tuples in $A^r$ associated to $q$ then by Observation \ref{obs:multiset} $(g(q_1),\dots,g(q_r))=g'(t_1,\dots,t_{n'})$ and the latter tuple belongs to
$R$ because $g'$ is a polymorphism of $\Gamma$.

We claim that $\dist(p_i,q_i)\leq 2\cdot \loss(C)$
for every $1\leq i\leq r$.
By definition, for every $a\in A$, $q_i(a)=\sum_{t=(t^1,\ldots,t^r)\in R,\ t^i=a}{p(t)/p(R)}$.
Since $p(R)=1-\loss(C)$, we have
\[
\frac{p_i(a)-\loss(C)}{1-\loss(C)}=\sum_{t^i=a}{\frac{p(t)}{p(R)}}-\sum_{t\not\in R}{\frac{p(t)}{p(R)}}\le
q_i(a)\le \sum_{t^i=a}{\frac{p(t)}{p(R)}}=\frac{p_i(a)}{1-\loss(C)}
\]
for every $a\in A$. Moreover, we have $\frac{p_i(a)}{1-\loss(C)}\leq p_i(a)+2\loss(C)$, which immediately follows from
$\frac{p_i(a)}{1-\loss(C)}=p_i(a)+\frac{\loss(C) p_i(a)}{1-\loss(C)}$, $\loss(C)\leq 1/2$ and
$p_i(a)\leq 1$. Hence, $q_i(a)\in \left[p_i(a)-\loss(C),p_i(a)+2\cdot \loss(C)\right]$ for every $a\in A$.
We conclude that
\[
\Prob_{g\sim\fas}\{(g(p_1),\dots,g(p_r))\not\in R\}
\leq \Prob_{g\sim\fas}\{\exists i \mbox{ such that } g(p_i)\neq g(q_i)\}\leq 2\cdot r\cdot d\cdot \loss (C).
\]
\end{proof}

\subsection{Algorithms}
\label{sec:algorithms}

\newcommand{\cod}{\operatorname{ind}}

We now prove Theorem~\ref{the:algorithm} and then describe another constraint language $\Gamma$ which admits nicely structured Lipschitz distributions on symmetric polymorphisms (so that its Min CSP is constant-factor approximable).

Two classes of CSPs were introduced and studied in~\cite{Carvalho11:lattice}, one is a subclass of the other.
We need two notions to define these classes. A {\em distributive lattice} $(L,\cap,\cup)$ is a (lattice representable by a) family $L$ of subsets of a set
closed under intersection $\cap$ and union $\cup$. We say that two constraint languages $\Gamma_1=\{R_1^{(1)},\dots,R_m^{(1)}\}$ on domain $A$ and $\Gamma_2=\{R_1^{(2)},\dots,R_m^{(2)}\}$ on domain $B$, where the arities of corresponding relations match, are {\em homomorphically equivalent} if there are two mappings $f:A\rightarrow B$, $g:B\rightarrow A$ such that for all $1\leq i\leq m$,
$f(t_1)\in R_i^{(2)}$ for every $t_1\in R_i^{(1)}$ and $g(t_2)\in R_i^{(1)}$ for every $t_2\in R_i^{(2)}$.
The smaller class, which we shall call ${\mathcal C}$ (from 'languages with caterpillar duality'), consists of constraint languages $\Gamma$ such that $\Gamma$ is homomorphically equivalent to a constraint language $\Gamma'$
on some family $L$ of subsets of a finite sets that has polymorphisms $\cap$ and $\cup$, where $\cap$ and $\cup$ are the usual set-theoretic union and intersection (i.e. $(L,\cap,\cup)$ is a finite distributive lattice).
The larger class, which we shall call $\mathcal{J}$ (from 'languages with jellfish duality') is defined similarly, but we require $\Gamma'$ to have polymorphism $x\cap (y\cup z)$.
Constraint languages $\ihbs{k}$ (defined in Section~\ref{sec:Intro}) belong to the class $\mathcal{J}$, but not to $\mathcal{C}$ (see Example~\ref{ex:poly}).
See~\cite{Carvalho11:lattice} for other specific examples of CSPs contained in these classes.
For every $\Gamma$ in $\mathcal{C}$, $\MinCSP\Gamma$ was shown to belong to $\APX$ in~\cite{Kun12:robust}. This result was extended to $\mathcal{J}$ in~\cite{Dalmau13:robust} (see Theorems~5.8 and 4.8 there).


We will now show how Lipschitz distributions on symmetric polymorphisms can be used to provide a constant-factor approximation algorithm for $\MinCSP\Gamma$ for every $\Gamma$ in this class. Observe that if $\Gamma$ and $\Gamma'$ are homomorphically equivalent then $\MinCSP\Gamma$ and $\MinCSP{\Gamma'}$ are essentially the same problem because there is an obvious one-to-one correspondence between
instances of $\MinCSP{\Gamma_1}$ and $\MinCSP{\Gamma_2}$ (swapping $R_i^{(1)}$ and $R_i^{(2)}$ in all constraints) and
the maps $f$ and $g$ allow one to move between solutions to corresponding instances without any loss of quality.
So, we can assume that $A$ consists of subsets of some set, and $\Gamma$ has polymorphism $x\cap (y\cup z)$ where $(A,\cap,\cup)$ is a distributive lattice.



Throughout the section, $K$ will denote the maximum arity of a relation in such $\Gamma$. For every $1\leq h\leq n$, let $g_{h,n}(x_1,\dots,x_{n})$ be the $n$-ary symmetric operation on $A$ defined as
$$\bigcup_{I\subseteq\{1,\dots,n\}, |I|= h} \left(\bigcap_{i\in I} x_i\right)$$

\begin{lemma}
\label{le:jellypoly}
For all $h,n\in\mathbb{N}$ with $\left(1-\frac{1}{|A|^K}\right)n<h\leq n$, we have $g_{h,n}\in\pol(\Gamma)$.
\end{lemma}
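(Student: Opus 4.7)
The plan is to prove $g_{h,n}\in\pol(\Gamma)$ directly by showing that for every relation $R\in\Gamma$ of arity $r\le K$ and every tuples $t_1,\dots,t_n\in R$, applying $g_{h,n}$ componentwise yields a tuple in $R$. The key numerical fact I would exploit is that, letting $m=n-h$, the hypothesis gives $n>m\cdot |A|^K$. Since $R\subseteq A^r$ has at most $|A|^K$ elements, a simple pigeonhole argument produces a ``popular'' tuple $t^*\in R$ that coincides with $t_i$ on a set $S\subseteq\{1,\dots,n\}$ with $|S|\ge m+1$. The design of the bound is precisely engineered so that $|I|+|S|>n$ for every $h$-subset $I$, hence every such $I$ intersects $S$.

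From this intersection property I would draw two consequences. First, componentwise, $\bigcap_{i\in I}t_i\subseteq t^*$ whenever $|I|=h$, because at least one of the $t_i$'s appearing in the meet equals $t^*$; taking the union over all $h$-subsets preserves the inclusion, so $g_{h,n}(t_1,\dots,t_n)\subseteq t^*$. Second, picking any index $i_0\in S$, the tuple $t^*\cap g_{h,n}(t_1,\dots,t_n)$ is equal to $g_{h,n}(t_1,\dots,t_n)$ itself.

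Next I would define the $n$-ary term operation
\[
\phi(x_1,\dots,x_n)\;=\;x_{i_0}\,\cap\,\bigcup_{\,|I|=h\,}\bigcap_{i\in I}x_i
\]
and show it lies in the clone generated by $f(x,y,z)=x\cap(y\cup z)$. The building blocks are: the binary meet, obtained as $f(x,y,y)=x\cap y$; arbitrary iterated meets $\bigcap_{i\in J}x_i$, obtained by nesting; and the operation $p(x_0,y_1,\dots,y_L)=x_0\cap(y_1\cup\cdots\cup y_L)$, which is built by induction on $L$ using absorption and distributivity as in the identity $f(x_0,f(x_0,y_1,y_2),y_3)=x_0\cap(y_1\cup y_2\cup y_3)$. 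Substituting the iterated meets $\bigcap_{i\in I}x_i$ for the $y_j$'s in $p$ yields $\phi$, so $\phi\in\pol(\Gamma)$ because clones are closed under composition.

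Finally, since $\phi$ is a polymorphism of $R$, the tuple $\phi(t_1,\dots,t_n)=t^*\cap g_{h,n}(t_1,\dots,t_n)$ lies in $R$; combined with the observation above that this intersection equals $g_{h,n}(t_1,\dots,t_n)$, we conclude $g_{h,n}(t_1,\dots,t_n)\in R$. The main obstacle, and the one place where care is needed, is the clone-closure step for $p$: one has to verify, using the lattice identities, that the iterated expression $f(x_0,\,f(x_0,y_1,y_2),\,y_3),\dots$ collapses to $x_0\cap(y_1\cup\cdots\cup y_L)$, and that $\phi$ itself, though it does not have the naive ``common variable in every disjunct'' shape of $g_{h,n}$, does have exactly that shape after prepending $x_{i_0}$, so it lands in the clone generated by $f$.
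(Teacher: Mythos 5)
Your proof is correct and follows essentially the same route as the paper's: both build an absorbing polymorphism $x_0 \cap g_{h,n}(x_1,\dots,x_n)$ in the clone generated by $x\cap(y\cup z)$ (via iterated meets, the identity $f(x,y,y)=x\cap y$, and the telescoping $f(x_0, f(x_0,y_1,y_2), y_3) = x_0\cap(y_1\cup y_2\cup y_3)$), and both then apply the same pigeonhole count on the popular tuple so that the extra intersection is absorbed and the result equals $g_{h,n}(t_1,\dots,t_n)\in R$. The only cosmetic difference is that the paper uses a fixed $(n+1)$-ary polymorphism $f_{h,n}$ and feeds the popular tuple into the spare slot, while you reuse the coordinate $x_{i_0}$ that already holds the popular tuple; this is an equivalent packaging, not a different argument.
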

\begin{proof}
It is not difficult to see that $x\cap y$ is also a polymorphism of $\Gamma$. Indeed, for every relation $R$ and every pair of tuples $t,t'\in R$, we have that $t\cap t'=t\cap (t'\cup t')$ and hence it belongs to $R$.
Using composition, we shall show that $R$ has polymorphism $f_{h,n}$ where
$f_{h,n}(x_0,x_1,\dots,x_{n})$ is the $(1+n)$-ary operation defined as
$$x_0\cap g_{h,n}(x_1,\dots,x_{n})=x_0\cap\left (\bigcup_{I\subseteq\{1,\dots, n\}, |I|=h} \left(\bigcap_{i\in I} x_i\right)\right)$$

First, we observe that for every $m\geq 2$, the $m$-ary operation $x_1 \cap\cdots\cap x_m$ preserves $R$ as it can be obtained by composition from $x\cap y$ by $x_1\cap(x_2\cap (x_3\cap\cdots\cap(x_{m-1}\cap x_m )\cdots))$. In a bit more complicated fashion we can show that $x_0\cap(x_1\cup\cdots\cup x_m)$ preserves $R$
for every $m\geq 3$.
If $m=3$ it follows that $x_0\cap((x_0\cap(x_1\cup x_2))\cup x_3)$ is equal to $x_0\cap(x_1\cup x_2\cup x_3)$ (recall that $\cup$ and $\cap$ are the set union and intersection respectively). The pattern generalizes easily to arbitrary values for $m$. Finally, one obtains $f_{h,n}$  by suitably composing
$x_0\cap(x_1\cup\cdots\cup x_{{n}\choose{h}})$ and $x_1 \cap\cdots\cap x_h$.

Let $R$ be a relation in $\Gamma$ of arity, say, $r$ and let $t_1,\dots,t_{n}$ be a list of (not necessarily distinct) tuples in $R$. By the pigeon-hole principle, there exists a tuple $t$ appearing at least $\lceil n/|A|^r\rceil$ times in $t_1,\dots,t_{n}$. It follows from the choice of $h$ and $t$, that for every $I\subseteq\{1,\dots,n\}$, with $|I|=h$, there exists $i\in I$ such that $t=t_i$. It then follows  that $f_{h,n}(t,t_1,\dots,t_{n})$, which necessarily belongs to $R$, is precisely $g_{h,n}(t_1,\dots,t_{n})$
\end{proof}

For every natural number $n\in\mathbb{N}$, consider the $n$-ary fractional operation $\fas_{n}$ with support
$\left\{g_{h,n} \mid \left(1-\frac{1}{|A|^K}\right)n<h\leq n \right\}$ that distributes uniformly among the operations of its support.

\newcommand{\num}[2]{|{#1}|_{#2}}

\begin{lemma}
\label{le:jellyLipschitz}
There exists some $c\geq 0$ such that $\fas_{n}$  is  $c$-Lipschitz for every $n\in\mathbb{N}$.
\end{lemma}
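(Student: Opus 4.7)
The plan is to reduce the Lipschitz bound to a purely combinatorial count on the ground set underlying the lattice $A$. First I would fix a representation of $A$ as a family of subsets of a finite ground set $S$; by merging any two elements of $S$ that belong to exactly the same members of $A$ (equivalently, using Birkhoff's representation via join-irreducibles), I may assume $|S|$ is bounded by a constant $M$ depending only on $|A|$. With this representation, $g_{h,n}$ admits the elementwise description
\[
j \in g_{h,n}(x_1,\dots,x_n) \iff |\{i : j \in x_i\}| \ge h,
\]
for every $j \in S$ and every $x_1,\dots,x_n \in A$, since $\bigcup_{|I|=h}\bigcap_{i\in I} x_i$ collects exactly those $j$ that are contained in at least $h$ of the $x_i$'s.

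Next, fix $\mathbf{a},\mathbf{b} \in A^n$ and, for $j \in S$, set $\alpha_j = |\{i : j \in a_i\}|$ and $\beta_j = |\{i : j \in b_i\}|$. From the elementwise description, $g_{h,n}(\mathbf{a})$ and $g_{h,n}(\mathbf{b})$ differ at coordinate $j$ precisely when $h$ lies in the half-open interval $(\min(\alpha_j,\beta_j), \max(\alpha_j,\beta_j)]$, so the number of integers $h \in \{1,\dots,n\}$ that make them differ at $j$ equals $|\alpha_j - \beta_j|$. Writing $\alpha_j = n\sum_{x \in A,\,j \in x} d_{\mathbf{a}}(x)$ and similarly for $\beta_j$, the triangle inequality gives
\[
|\alpha_j - \beta_j| \le n \sum_{x \in A} |d_{\mathbf{a}}(x) - d_{\mathbf{b}}(x)| \le n \cdot |A| \cdot \dist(\mathbf{a},\mathbf{b}).
\]

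A union bound over $j \in S$ then shows that the number of $h \in \{1,\dots,n\}$ for which $g_{h,n}(\mathbf{a}) \ne g_{h,n}(\mathbf{b})$ is at most $M\cdot |A|\cdot n\cdot \dist(\mathbf{a},\mathbf{b})$. The support of $\fas_n$ is the set of $g_{h,n}$ with $h$ in the range $(1-|A|^{-K})n < h \le n$, whose cardinality is at least $\max(1,\lfloor n/|A|^K\rfloor)$. For $n \ge |A|^K$, the ratio of these two quantities is bounded by $M \cdot |A|^{K+1} \cdot \dist(\mathbf{a},\mathbf{b})$ (after accounting for a constant factor from the floor); for $n < |A|^K$ the support has size one but the bound $M \cdot |A| \cdot n \le M \cdot |A|^{K+1}$ on the numerator yields the same estimate. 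Thus taking $c = 2 M |A|^{K+1}$ (say) gives the required uniform Lipschitz constant.

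The main obstacle is that $|\alpha_j - \beta_j|$ is only bounded by a quantity of order $n$, so a naive estimate would blow up with $n$. The saving ingredient is precisely the choice of support in the definition of $\fas_n$: the range $(1-|A|^{-K})n < h \le n$ is long enough (of order $n$) that dividing by its size cancels the factor of $n$ in the numerator, while still being short enough (and of high enough $h$) that every operation in the support is a polymorphism of $\Gamma$ by Lemma~\ref{le:jellypoly}. Balancing these two requirements is what makes the construction of $\fas_n$ work.
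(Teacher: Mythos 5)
The proposal is correct and takes essentially the same approach as the paper: both express membership of $j\in S$ in $g_{h,n}(\mathbf{a})$ as the threshold condition $|\{i : j\in a_i\}|\ge h$, observe that $g_{h,n}(\mathbf{a})\ne g_{h,n}(\mathbf{b})$ for at most $\sum_{j\in S}|\alpha_j-\beta_j|=O(n\cdot\dist(\mathbf{a},\mathbf{b}))$ values of $h$, and divide by the $\Theta(n/|A|^K)$ size of the support. The extra reduction to a ground set of size bounded by a function of $|A|$ alone and the explicit treatment of small $n$ are harmless refinements (the paper's $|S|$ is already a fixed constant once $A$ is fixed, and the bound holds trivially for small $n$).
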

\begin{proof}
Let $\mathbf{a}=(a_1,\dots,a_{n}),\mathbf{b}=(b_1,\dots,b_{n})\in A^{n}$.  Recall that from distributivity we assume that every element $a\in A$ is a subset of some set that we call $S$. Note that, according to the definition of $g_{h,n}$, an element $j\in S$ belongs to $g_{h,n}(\mathbf{a})$ iff $\num{\mathbf{a}}{j}\geq h$ where $\num{\mathbf{a}}{j}$ is defined to be $|\{ 1\leq i\leq n \mid j\in a_i\}|$. Consequently, $g_{h,n}(\mathbf{a})\neq g_{h,n}(\mathbf{b})$ iff there exists some $j\in S$ such that $\num{\mathbf{a}}{j}< h\leq\num{\mathbf{b}}{j}$ or
$\num{\mathbf{b}}{j}< h\leq\num{\mathbf{a}}{j}$ . It follows that

\begin{align*}
\Prob_{g\sim\fas_{n}}\{g(\mathbf{a})\neq g(\mathbf{b})\} & =\frac{\left|\left\{ h \mid  \exists j (
\num{\mathbf{a}}{j}< h\leq\num{\mathbf{b}}{j} \vee \num{\mathbf{b}}{j}< h\leq\num{\mathbf{a}}{j})\right\}\right|}{n/|A|^K} \\
& \leq \sum_{j\in S} \frac{\left|\left\{ h \mid
\num{\mathbf{a}}{j}< h\leq\num{\mathbf{b}}{j} \vee \num{\mathbf{b}}{j}< h\leq\num{\mathbf{a}}{j}\right\}\right|}{n/|A|^K} \\
& =
\frac{1}{n/|A|^K}\sum_{j\in S} |\num{\mathbf{a}}{j}-\num{\mathbf{b}}{j}|
\leq {|A|^K}\cdot |S|\cdot\dist(\mathbf{a},\mathbf{b}).
\end{align*}

\end{proof}

With the help of the sequence $\fas_{n}$, we can prove Theorem~\ref{the:algorithm}, i.e. obtain a constant-factor approximation algorithm for $\MinCSP\Gamma$.
A different proof of this result was given in~\cite{Dalmau13:robust}.

\begin{proof} (of Theorem~\ref{the:algorithm}).
Let $I=(V,A,{\mathscr C})$ be any instance of $\MinCSP\Gamma$ and let
$p_v (v\in V)$, $p_C (C\in{\mathscr C})$ be an optimal solution of $\lpmin(I)$
with objective value $\blpopt(I)$. We can assume that there exists some $n\in\mathbb{N}$ such that all the probabilities in the solution are of the form $n'/n$ where $n'$ is a non-negative integer. Also we can assume that $\log(n)$ is polynomial in the size of instance $I$.

Consider an assignment $s$ for $I$ obtained in the following way: draw  $g_{h,n}$ according to $\fas_{n}$ (i.e. select $\left(1-\frac{1}{|A|^K}\right)n<h\leq n$ uniformly at random) and assign $s(v)=g_{h,n}(p'_v)$ where $p'_v$ is any tuple such that every $a\in A$ appears exactly $p_v(a)\cdot n$ times in $p'_v$. It can be shown (this is basically the proof of direction $(2\Rightarrow 1)$ of Theorem \ref{the:Lipschitz}) that there exists some $c'\geq 1$ such that expected value of assignment $s$ is $c'\cdot \blpopt(I)$. In particular, $c'$ can be taken to be $2Kc$ where $c$ is the Lipschitz constant of $\fas_{n}$.

We shall prove that there is a randomized polynomial-time algorithm that constructs $s$. Select $\left(1-\frac{1}{|A|^K}\right)n<h\leq n$ uniformily at random. Recall that we assume that every element $a\in A$ is a subset of some set that we call $S$. Hence, in order to compute $g_{h,n}(p'_v)$, it is only necessary to give an efficient procedure that decides, for every $j\in S$, whether $j\in g_{h,n}(p'_v)$.
Note that, according to the definition of $g_{h,n}$, $j\in g_{h,n}(p_v')$ iff the number, $|p_v'|_j$, of entries in tuple $p'_v$ that contain $j$ is at least $h$. This number can be easily computed from $p_v$ as $|p_v'|_j=n\cdot \sum_{\{a\in A\mid j\in a\}} p_v(a)$.
\end{proof}

We finish this subsection by introducing another constraint language $\Gamma$ such that $\MinCSP\Gamma$ admits a constant-factor approximation algorithm. The interest of this result is in the fact that it is the first known example of a constraint language where $\MinCSP\Gamma$ has a constant-factor approximation algorithm but is not invariant under totally symmetric polymorphisms of all arities (i.e. $\Gamma$ does not have the so-called width 1 property~\cite{Feder98:monotone}).
This constraint language has domain $A=\{-1,0,+1\}$ and contains relations $R_{+}=\{(a_1,a_2,a_3)\in A^3 \mid a_1+a_2+a_3\geq 1\}$ and $R_{-}=\{(a_1,a_2,a_3)\in A^3 \mid a_1+a_2+a_3\leq -1\}$. This is the example in~\cite{Kun16} that we mentioned after Theorem~\ref{LP-sym}. It is easy to show that this constraint language has
no totally symmetric polymorphism of arity 3.


However $\{R_{+},R_{-}\}$ have many symmetric polymorphisms. In particular, it is not difficult to see that, for all $h,n\in\mathbb{N}$ with $h<\lfloor n/3\rfloor$, operation
$$s_{h,n}(x_1,\dots,x_n)=\begin{cases}
1 & \text{if } h< \sum_i x_i  \\
0 & \text{if }-h\leq\sum_i x_i\leq h  \\
-1 & \text{if }\sum_i x_i <-h
\end{cases}$$
preserves $\Gamma$.
It is also easy to show that the $n$-ary fractional operation with support $\{s_{h,n} \mid h<\lfloor n/3\rfloor\}$ that distributes uniformly among the operations of its support is $3$-Lipschitz and that can be efficiently sampled. Consequently,
$\MinCSP{\{R_{+},R_{-}\}}$ has a constant-factor approximation algorithm.

\subsection{NP-hardness result}
\label{sec:NU}

\newcommand{\independent}{\operatorname{Max}\ \operatorname{IS}}

In this subsection we prove Theorem~\ref{the:NU}, i.e.
show that, modulo P$\neq$NP, if $\MinCSP\Gamma$ admits a constant-factor-approximation algorithm then $\Gamma$ must have a near-unanimity (NU) polymorphism (recall the definition of an NU operation from Section \ref{sec:algebra}). NU polymorphisms
have been well studied in universal algebra~\cite{Baker75:chinese-remainder} and have been applied in CSP~\cite{Barto12:NU,Barto17:polymorphisms,Feder98:monotone,Dalmau17:robust}.
For example, every relation invariant under an $n$-ary NU operation is uniquely determined by its
$(n-1)$-ary projections~\cite{Baker75:chinese-remainder}, and NU polymorphisms characterize CSPs of ``bounded strict width''~\cite{Feder98:monotone}.



We can assume (proved in Lemma 3.7 of \cite{Dalmau13:robust}) that $\Gamma$ contains all unary singleton relations $\{a\}$, $a\in A$. This implies that polymorphisms of $\Gamma$ are idempotent.
It can be easily derived from Theorem~\ref{the:Lipschitz} that, modulo UGC, $\Gamma$ must have a near-unanimity polymorphism of some (large enough) arity. Indeed, for any $n$-ary fractional operation $\fas_{n}$ with support on symmetric polymorphisms of $\Gamma$ and every pair $a,b\in A$, the mass of operations $g$ in the support of $\phi_{n}$ such that $g(b,a,\dots,a)\ne g(a,a,\dots,a)\ (=a)$
is at most $\frac{c}{n}$. Since $c$ is constant, if we choose $n$ large enough, some $g$ in the support of $\phi_{n}$ will satisfy the near-unanimity identity.



In this section we shall prove it assuming only P$\neq$NP. As an intermediate step, we consider  the variant of $\CSP\Gamma$ where some constraints in an instance can be designated as
{\em hard}, meaning that they must be satisfied in any feasible solution, while the other constraints are soft and can be falsified.
It makes sense to
investigate approximation algorithms for this mixed version of CSP (see, e.g.~\cite{Guruswami16:mixed}).
In particular, the value of a feasible assignment for a instance of mixed $\MinCSP\Gamma$ is defined to be the number (or total weight) of soft constraints it violates.
It is not difficult to see, and was mentioned in~\cite{Guruswami16:mixed}
that mixed $\MinCSP\Gamma$ has a constant-factor approximation algorithm if and only if the ordinary, not mixed,
$\MinCSP\Gamma$ has such an algorithm.

The proof of our NP-hardness result makes use of a result about hardness of approximation for the problem $\independent_k$ in which
the goal is to find a maximum independent set in a given $k$-uniform hypergraph. Recall that an independent set in a hypergraph is a subset of its vertices that does not include any of its hyperedges (entirely).
 For real numbers $0\leq \alpha,\beta\leq 1$, say that an algorithm {\em $(\alpha,\beta)$-distinguishes} $\independent_k$ if, given a $k$-uniform hypergraph $H=(V,E)$, it correctly decides between the following two cases:
\begin{enumerate}
\item the size of the largest independent set of $H$ is at least $\beta\cdot |V|$
\item the size of the largest independent set of $H$ is at most $\alpha\cdot |V|$.
\end{enumerate}

\noindent
Note that it does not matter what the algorithm does for a hypergraph falling into neither of these cases.

\begin{theorem}[\cite{Dinur05:new}]\label{the:hyper}
For any integer $k\geq 3$ and any real number $\epsilon>0$, it is $\NP$-hard to $(\epsilon,1-\frac{1}{k-1}-\epsilon)$-distinguish $\independent_k$.
\end{theorem}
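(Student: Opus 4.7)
The plan is to prove the contrapositive: assume $\Gamma$ has no NU polymorphism of any arity and derive NP-hardness of constant-factor approximation of $\MinCSP\Gamma$. The strategy is a gap-preserving reduction from hypergraph Max Independent Set (invoking Theorem~\ref{the:hyper}) to $\MinCSP\Gamma$, with the hyperedge arity playing the role of a free parameter that can be tuned to produce arbitrarily large gaps.

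The algebraic input is a Baker--Pixley-style consequence of the absence of NU polymorphisms. Since $\Gamma$ contains all unary singletons, every polymorphism of $\Gamma$ is idempotent, so a standard indicator/Galois argument shows: for each $k\ge 3$, the lack of a $(k+1)$-ary NU polymorphism yields a relation $R$ pp-definable from $\Gamma$ together with a tuple $t\notin R$ such that every $k$-coordinate subtuple of $t$ extends to a tuple in $R$. The goal is to massage this witness, using idempotence and the singletons $\{a\},\{b\}\in\Gamma$ for a chosen pair $a\ne b\in A$, into a \textbf{hyperedge gadget} $G^{(k)}$—a $\CSP\Gamma$-instance with external variables $y_1,\dots,y_{k+1}$ and internal variables—enjoying the following two properties: (a) the assignment $y_i=a$ for all $i$ has no satisfying extension to the internals, and (b) every assignment with $(y_1,\dots,y_{k+1})\in\{a,b\}^{k+1}\setminus\{(a,\dots,a)\}$ does have a satisfying extension. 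Identifying coordinates of $R$ where $t$ takes equal values (and using singletons to substitute the remaining positions) should collapse the Baker--Pixley witness into this uniform shape.

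Armed with $G^{(k)}$, I reduce $\independent_{k+1}$ on a $(k+1)$-uniform hypergraph $H=(V,E)$ to \emph{mixed} $\MinCSP\Gamma$ (equivalent to the ordinary problem for constant-factor approximability, as recalled in the excerpt). The instance has main variables $x_v$ ($v\in V$), unit-weight \emph{soft} constraints $x_v=a$ using the singleton $\{a\}\in\Gamma$, and for each hyperedge $e\in E$ a fresh copy of $G^{(k)}$ as \emph{hard} constraints, with the external variables identified with $\{x_v:v\in e\}$. In the completeness case, an independent set $S$ of size $(1-\tfrac{1}{k}-\epsilon)|V|$ yields the assignment $x_v=a$ for $v\in S$ and $x_v=b$ otherwise; independence guarantees that for each hyperedge the external tuple is not all $a$, so property (b) satisfies every hard gadget, leaving only $|V\setminus S|\le (\tfrac{1}{k}+\epsilon)|V|$ violated soft constraints. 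In the soundness case where every independent set has size at most $\epsilon|V|$, property (a) forces $\{v:x_v=a\}$ to be independent in any feasible solution, so at least $(1-\epsilon)|V|$ soft constraints are violated. The ratio $(1-\epsilon)/(\tfrac{1}{k}+\epsilon)$ tends to $k$ as $\epsilon\to 0$, so Theorem~\ref{the:hyper} rules out $C$-approximation for any constant $C$ (take $k>C$).

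The main obstacle is the gadget construction, i.e.\ promoting the raw Baker--Pixley witness to property (b). The witness only guarantees extensibility of projections onto $k$-subsets of the coordinates of $t$, and $t$ is not \emph{a priori} the all-$a$ tuple. Turning this into a gadget that extends for \emph{every} proper subset $T\subsetneq[k+1]$ of positions (with the remaining positions forced to $b$ via the singleton $\{b\}$) will require carefully identifying positions of $R$ on which $t$ agrees to make the forbidden tuple uniform, pinning the remaining coordinates with unary singletons to reduce the effective arity to $k+1$, and possibly conjoining several Baker--Pixley witnesses to cover all proper subsets of external positions simultaneously. All of this happens inside the pp-definitional closure of $\Gamma$, so the resulting gadget remains a legitimate $\CSP\Gamma$-instance; but the algebraic bookkeeping is the technically delicate part of the argument, whereas the subsequent reduction and gap analysis are routine once $G^{(k)}$ is in hand.
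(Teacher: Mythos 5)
Your proposal does not address the statement at hand. Theorem~\ref{the:hyper} is the assertion that for every $k\geq 3$ and $\epsilon>0$ it is $\NP$-hard to $(\epsilon,1-\frac{1}{k-1}-\epsilon)$-distinguish $\independent_k$, i.e.\ the inapproximability of maximum independent set in $k$-uniform hypergraphs. This is an external result imported from Dinur, Guruswami, Khot and Regev \cite{Dinur05:new}; it rests on a multilayered PCP construction and is not proved in the paper at all. What you have written is instead a (reasonable) sketch of the proof of Theorem~\ref{the:NU}, the paper's NP-hardness result for $\MinCSP\Gamma$ when $\Gamma$ lacks an NU polymorphism --- and your argument explicitly \emph{invokes} Theorem~\ref{the:hyper} as a black box in the soundness/completeness analysis. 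As a proof of Theorem~\ref{the:hyper} it is therefore circular: you assume the very statement you are asked to establish, and nothing in your reduction from Baker--Pixley witnesses to hypergraph independent set could ever yield the PCP-based hardness of the hypergraph problem itself, since the reduction goes in the opposite direction (from $\independent_{k}$ \emph{to} $\MinCSP\Gamma$).

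For what it is worth, as an outline of the proof of Theorem~\ref{the:NU} your sketch matches the paper's strategy closely: the paper likewise derives, from the absence of NU polymorphisms and Baker--Pixley, a pp-definable $k$-ary relation $R$ with $R\cap\{a,b\}^k=\{a,b\}^k\setminus\{(a,\dots,a)\}$ (Lemma~\ref{le:disjunctive}), encodes hyperedges by hard copies of the pp-definition and vertices by soft constraints $\{a\}$ (Lemma~\ref{le:reduction}), and closes the gap argument via Theorem~\ref{the:hyper}. The one substantive point your sketch glosses over is the step you yourself flag as delicate: upgrading the raw Baker--Pixley witness (a single non-tuple all of whose coordinate-flips lie in $R$) to the uniform two-element pattern on $\{a,b\}^k$. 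The paper does this not by conjoining several witnesses but by symmetrising $S$ and then exploiting weak near-unanimity polymorphisms $g_n$ obtained from bounded width (Theorem~2.8 of \cite{Kozik14:maltsev}), which is where the hypothesis that $\MinCSP\Gamma$ is constant-factor approximable re-enters the algebra. But none of this constitutes a proof of the stated theorem, which should simply be cited.
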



The key in proof of Theorem~\ref{the:NU} is to show that, roughly, if $\Gamma$ has no NU polymorphisms then $\Gamma$ can simulate (pp-define, to be precise),
for every $k\geq 3$, a $k$-ary relation $R_k$ such that
$R_k\cap\{a,b\}^k=\{a,b\}^k\setminus\{(a,\dots,a)\}$ for some distinct $a,b\in A$.
This relation, used in hard constraints, can encode a $k$-uniform hypergraph, while soft unary constraints
using relation $\{a\}$ simulate a choice of an independent set. To make this precise we will need a few definitions.

We say that $R$ is pp-definable from $\Gamma$ if there exists a (primitive positive) formula
$$\phi(x_1,\dots,x_k)\equiv \exists y_1,\dots,y_l \ \psi(x_1,\dots,x_k,y_1,\dots,y_l)$$
where $\psi$ is a conjunction of atomic formulas with relations in $\Gamma$ and $\eqC$ such that for every $(a_1,\dots,a_k)\in A^k$
$$(a_1,\dots,a_k)\in R \text{ if and only if } \phi(a_1,\dots,a_k) \text{ holds}.$$
Note that in the definition of primitive positive formulas we are slightly abusing   notation by identifying a relation with its relation symbol.
It is shown in~\cite{Dalmau13:robust} that if $\Gamma$ contains $\eqC$ and $R$ is pp-definable from $\Gamma$ then the problems $\MinCSP\Gamma$ and $\MinCSP{\Gamma\cup\{R\}}$ simultaneously belong or do not belong to $\APX$.

An $n$-ary operation on $A$ is called a {\em weak near-unanimity (WNU)} operation if it is idempotent and satisfies the identities
$$f(y,x,\dots,x)=f(x,y,\dots,x)=\cdots=f(x,x,\dots,y).$$

\medskip

\begin{proof} (of Theorem~\ref{the:NU}) Assume, towards a contradiction, that $\Gamma$ falsifies the statement of the theorem.

The following lemma can be derived from a combination of several known results. We give a (more or less) direct proof for completeness.

\begin{lemma}
\label{le:disjunctive}
For every $k\geq 1$, there is a $k$-ary relation, $R$, pp-definable from
$\Gamma$, and $a,b\in A$ such that
$$R\cap\{a,b\}^k=\{a,b\}^k\setminus\{(a,\dots,a)\}$$
\end{lemma}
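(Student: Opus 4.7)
My plan is to exploit the Galois correspondence between pp-definability and polymorphism invariance together with the two hypotheses active in this proof of Theorem~\ref{the:NU}: $\Gamma$ has no NU polymorphism, and (from the contradiction assumption) $\MinCSP\Gamma$ admits a constant-factor approximation algorithm. Since $\Gamma$ contains $\eqC$ and all singletons, a relation over $A$ is pp-definable from $\Gamma$ precisely when it is preserved by every polymorphism of $\Gamma$. I would take $R := \langle N_k^{a,b}\rangle$, the subuniverse of $(A^k,\pol(\Gamma))$ generated by $N_k^{a,b} := \{a,b\}^k\setminus\{(a,\dots,a)\}$, and try to choose $a,b\in A$ so that $(a,\dots,a)\notin R$.

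Unfolding the generation procedure, $(a,\dots,a)\in R$ holds iff there exist $n$, an $n$-ary $f\in\pol(\Gamma)$, and subsets $S_1,\dots,S_k\subseteq\{1,\dots,n\}$ with $\bigcup_i S_i = \{1,\dots,n\}$, such that $f(\chi_{S_i}) = a$ for every $i$, where $\chi_S\in\{a,b\}^n$ is the indicator tuple with $b$ on $S$ and $a$ elsewhere. The task is to rule out such a ``witness configuration'' for some choice of $a,b$.

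To that end, I would reduce to a two-element subalgebra of $(A,\pol(\Gamma))$. The constant-factor-approximability hypothesis combined with Theorems~\ref{the:Lipschitz} and \ref{LP-sym} forces $\Gamma$ (and hence every two-element subuniverse) to admit symmetric polymorphisms of all arities, while the absence of NU polymorphisms of $\Gamma$ restricts to the absence of NU on each such subuniverse (any Boolean NU would compose with the singleton-preserving polymorphisms to yield an NU on all of $A$). Consulting Post's lattice, the only idempotent clones on $\{a,b\}$ admitting symmetric operations of every arity but no NU are the $\min$-clone and the $\max$-clone. Using standard localisation arguments from the algebraic approach to CSP, there is a two-element subuniverse $\{a,b\}$ of $(A,\pol(\Gamma))$ on which $\pol(\Gamma)$ restricts to one of these two clones; label so that it is the $\min$-clone with $a$ the maximum element (the $\max$-case is symmetric).

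Under this choice, every polymorphism $f\in\pol(\Gamma)$, once restricted to $\{a,b\}^n$, takes the form $\min_{i\in T_f} x_i$ for some non-empty $T_f\subseteq\{1,\dots,n\}$; then $f(\chi_{S_i}) = a$ forces $S_i\cap T_f = \emptyset$ for every $i$, whence $T_f\cap\bigcup_i S_i = \emptyset$, contradicting $T_f\ne\emptyset$ and $\bigcup_i S_i = \{1,\dots,n\}$. Hence $(a,\dots,a)\notin R$. The main obstacle is the localisation step to a two-element subuniverse with a prescribed induced clone; both hypotheses are essential here, for otherwise e.g.\ the Maltsev clone on $\{0,1\}$ (no NU, no constant-factor approximation of its Min CSP) would be a counterexample, and it is this algebraic localisation to which the phrase ``combination of several known results'' in the statement most strongly refers.
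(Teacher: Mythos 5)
Your route — localise to a two-element subuniverse $\{a,b\}$, identify its induced clone as $\langle\min\rangle$ or $\langle\max\rangle$ via Post's lattice, and then observe that $(a,\dots,a)$ cannot be generated from $\{a,b\}^k\setminus\{(a,\dots,a)\}$ — is genuinely different from the paper's, but it has gaps I do not see how to close. The central one is the transfer ``no NU polymorphism of $\Gamma$ $\Rightarrow$ no NU operation in the clone induced on $\{a,b\}$.'' A polymorphism $f\in\pol(\Gamma)$ can perfectly well restrict to an NU on a two-element subuniverse without being an NU on $A$, and there is no general device that promotes a restricted NU to a global one by ``composing with singleton-preserving polymorphisms''; idempotence ensures each singleton is a subuniverse, not that operations on a sub\-universe lift. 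Since this implication is what lets you eliminate every idempotent Boolean clone containing a majority, without it the Post's-lattice case analysis does not get off the ground. In addition, you assert but do not justify that a two-element subuniverse with the desired induced clone exists at all — containing all singletons does not give you a two-element subuniverse — so the ``localisation'' step is unsupported even before the NU-transfer issue.

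There is also a mismatch in the hypotheses you feed into the reduction. You obtain symmetric polymorphisms of all arities from constant-factor approximability via Theorems~\ref{the:Lipschitz} and~\ref{LP-sym}, but that chain needs the integrality gap of BLP to be finite, and the implication ``constant-factor approximable $\Rightarrow$ finite integrality gap'' is only available under the UGC (Theorem~\ref{thm:optLP}). Theorem~\ref{the:NU} is an NP-hardness statement. The paper instead invokes Theorem~9 of \cite{Dalmau13:robust} to get \emph{bounded width} from constant-factor approximability assuming only $\PT\ne\NP$, which yields WNU polymorphisms of all arities (via \cite{Kozik14:maltsev}) but not symmetric ones; the paper's actual construction is a Baker--Pixley argument producing, for large $n$, a pp-definable $T\subseteq A^n$ and a tuple outside $T$ with all one-coordinate modifications inside, followed by a pigeonhole collapse to a symmetric $(k+2)$-ary relation $S$ over a pair $(a,c)$, then setting $b=g_n(c,a,\dots,a)$ for a WNU $g_n$ and projecting. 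No two-element subuniverse is ever formed, which is precisely what sidesteps the obstacles above.
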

\begin{proof} It follows easily from \cite{Baker75:chinese-remainder} that if $\pol(\Gamma)$ does not contain any NU operation, then for every $n\geq 3$ there is a relation $T\subseteq A^n$ which is pp-definable from $\Gamma$ and a tuple $(a_1,\dots,a_n)\not\in T$ such that for every $1\leq i\leq n$
there exists $c_i\in A$ such that $(a_1,\dots,a_{i-1},c_i,a_{i+1},\dots,a_n)\in T$.
Setting $n\geq (k+2)|A|^2$ it follows from the pigeon-hole principle that there exists
$a,c\in A$ and  $I=\{i_1,\dots,i_{k+2}\}\subseteq \{1,\dots,n\}$ of size $k+2$ such that $a_i=a$ and $c_i=c$ for every $i\in I$.
Consider relation $S$ defined as
$$S=\{(x_{i_1},\dots,x_{i_{k+2}}) \mid (x_1,\dots,x_n)\in T, \forall{i\!\not\in\! I}\, (x_i=a_i)\}$$
Clearly, $S$ is pp-definable using $T$ and the unary singletons. It follows that $S$ is pp-definable from $\Gamma$ as well.
We have  that $(a,a,\dots,a)\not\in S, t_1=(c,a,\dots,a)\in S$, $t_2=(a,c,\dots,a)\in S$, $\dots$, and $t_{k+2}=(a,a,\dots,c)\in S$.
We can also assume that, in addition to the previous property, $S$ is symmetric, meaning that if $(x_1,\dots,x_{k+2})$ belongs to $S$ then so does any tuple obtained by permuting its entries. This is because we can always replace $S$ by the relation
$\{(x_1,\dots,x_{k+2}) \mid (x_{\sigma(1)},\dots,x_{\sigma(k+2)})\in S \text{ for every permutation } \sigma  \}$
which is pp-definable from $S$. Since, by assumption, we have that $\MinCSP\Gamma$ admits a constant-factor approximation algorithm it follows from Theorem~9 of ~\cite{Dalmau13:robust} that $\Gamma$ has a certain property, called {\em bounded width} (or else $\PT=\NP$). Theorem~2.8 in \cite{Kozik14:maltsev} states that this property implies that
$\pol(\Gamma)$ contains WNU polymorphisms $g_3,g_4$ of arity 3 and 4, respectively, such that $g_3(y,x,x)=g_4(y,x,x,x)$ holds for for every $x,y\in A$. The proof of Theorem~2.8 in \cite{Kozik14:maltsev} shows how to obtain $g_n$ for $n=3,4$, but the proof generalizes immediately to show that, for each $n\ge 3$,
$\Gamma$ has an $n$-ary WNU polymorphism $g_n$, of arity $n$, and the identity $g_n(y,x,\dots,x)=g_{n'}(y,x,\dots,x)$ holds for all $n,n'$.  


Let $b=g_n(c,a,\dots,a)$ and let $j$ be minimum with the property that $S$ contains every tuple $t\in \{a,b\}^{k+2}$ with at
least $j$ $b$'s. We claim that $1\leq j\leq 3$. The lower bound follows from the fact that $(a,\dots,a)\not\in S$. For the upper bound, it follows from the fact every $g_n$ is a WNU (and so idempotent), that every tuple
$t\in \{a,b\}^{k+2}$ with $j(\geq 3)$ $b$'s can be obtained by  applying $g_j$ component-wise to tuples $t_{i_1},\dots,t_{i_j}$ where
$i_1,\dots,i_j$ are the components in $t$ that contain a $b$.  Since $S$ is symmetric
then it does not contain any tuple in $\{a,b\}^{k+2}$ with exactly $j-1$ $b$'s.

Finally, consider relation $R$ defined as
$$R=\{(x_1,\dots,x_k) \mid (\underbrace{b,\dots,b}_{j-1},\underbrace{a,\dots,a}_{3-j},x_1\dots,x_k)\in S\}$$
As before we infer that $R$ is pp-definable from $\Gamma$. It follows from the definition that $R$, $a$ and $b$ satisfy the statement of the lemma.
\end{proof}

\begin{lemma}
\label{le:reduction}
For every  $k\geq 1$, there is a linear algorithm  that, for a given $k$-regular hypergraph $H=(V,E)$, returns an instance $I$ of mixed $\MinCSP\Gamma$ such that the value of optimal solution for $I$ is $1-m/|V|$ where $m$ is the size of the maximum independent set in $H$.
\end{lemma}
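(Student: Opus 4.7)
The plan is to build the instance $I$ directly from the hypergraph $H$, using as the main gadget the relation $R$ produced by Lemma~\ref{le:disjunctive}. Concretely, applied with the given $k$, Lemma~\ref{le:disjunctive} yields a $k$-ary relation $R$ together with distinct elements $a,b\in A$ such that $R\cap\{a,b\}^k=\{a,b\}^k\setminus\{(a,\dots,a)\}$, and $R$ is pp-definable from $\Gamma$ by some formula $\phi(x_1,\dots,x_k)\equiv \exists y_1,\dots,y_\ell\,\psi(x_1,\dots,x_k,y_1,\dots,y_\ell)$ whose size depends only on $\Gamma$ and $k$ (both constants).

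The construction of $I$ is then: for each vertex $v\in V$ introduce a variable $x_v$; for each hyperedge $e=\{v_1,\dots,v_k\}\in E$ introduce fresh auxiliary variables $y_1^e,\dots,y_\ell^e$ and add, as \emph{hard} constraints, all the atomic conjuncts of $\psi(x_{v_1},\dots,x_{v_k},y_1^e,\dots,y_\ell^e)$ (the atoms using $\eqC$ can be eliminated by identifying variables, which is permissible since $\eqC\in\Gamma$); and for each $v\in V$ add the \emph{soft} unary constraint $\{a\}(x_v)$ with weight $1/|V|$. The singleton relation $\{a\}$ is in $\Gamma$ by hypothesis, so this is a legal instance. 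Since $\Gamma$ and $k$ are fixed, each hyperedge contributes $O(1)$ constraints and variables, so the construction runs in linear time in $|V|+|E|$.

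Next I would verify that $\opt(I)=1-m/|V|$ via two matching bounds. For the upper bound, take a maximum independent set $S^*$ of size $m$ and set $s(x_v)=a$ if $v\in S^*$, $s(x_v)=b$ otherwise. For each hyperedge $e$, since $S^*$ is independent, not all of its vertices lie in $S^*$, so $(s(x_{v_1}),\dots,s(x_{v_k}))\in \{a,b\}^k\setminus\{(a,\dots,a)\}\subseteq R$; by pp-definability the auxiliary variables $y_j^e$ can be assigned values making $\psi$ hold, so $s$ extends to a feasible assignment. The weight of violated soft constraints equals $(|V|-m)/|V|=1-m/|V|$. For the lower bound, take any feasible assignment $s$ and let $S=\{v:s(x_v)=a\}$. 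If some hyperedge $e$ satisfied $e\subseteq S$, then the projection onto $(x_{v_1},\dots,x_{v_k})$ would be $(a,\dots,a)\notin R$, contradicting feasibility (since $\exists y_1^e,\dots,y_\ell^e\,\psi$ forces the $x$-tuple to lie in $R$). Hence $S$ is independent, $|S|\le m$, and the value of $s$ is $(|V|-|S|)/|V|\ge 1-m/|V|$.

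I do not expect any real obstacle here: the only delicate point is the use of pp-definability to expand the atomic constraint ``$R(x_{v_1},\dots,x_{v_k})$'' into an equivalent conjunction of $\Gamma$-atoms with fresh existentially bound auxiliaries, and to argue in both directions that feasibility of the expanded hard part is equivalent to the tuple being in $R$; this is the standard property of pp-definability. The combinatorial heart of the reduction is just the identity $R\cap\{a,b\}^k=\{a,b\}^k\setminus\{(a,\dots,a)\}$, which makes ``all-$a$ on a hyperedge'' the unique forbidden pattern and thereby aligns feasibility with independence.
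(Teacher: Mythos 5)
Your proposal follows essentially the same route as the paper's proof: use the relation $R$ from Lemma~\ref{le:disjunctive} as the hyperedge gadget via its pp-definition (one fresh copy of auxiliaries per hyperedge, all hard), add soft $\{a\}$-constraints on the vertex variables, and observe that feasible assignments project onto $\{a,b\}^V$-assignments whose $a$-set is exactly an independent set. The one small clarification you add that the paper leaves implicit is assigning each soft constraint weight $1/|V|$ so that the optimum is literally $1-m/|V|$ rather than $|V|-m$; this is harmless and in fact makes the statement read more cleanly.
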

\begin{proof} Fix $k\geq 1$ and let  $R$ and $a,b$ be as in Lemma \ref{le:disjunctive}. Let $\exists y_1,\dots,y_l \ \psi(x_1,\dots,x_k,y_1,\dots,y_l)$ be a primitive positive formula defining $R$ from $\Gamma$. It is well known that $\psi$ can be seen as an instance $J$ of $\CSP\Gamma$. More precisely, define $J$ to be the instance
that has variables $x_1,\dots,x_k,y_1,\dots,y_l$ and contains for every atomic formula $S(v_1,\dots,v_r)$ in $\psi$, the constraint $((v_1,\dots,v_r),S)$.  It follows that for any assignment $s:\{x_1,\dots,x_k,y_1,\dots,y_l\}\rightarrow A$, $s$ is a solution of $J$ if and only if $\psi(s(x_1),\dots,s(x_k),s(y_1),\dots,s(y_l))$ holds.

Consider the algorithm that, given a $k$-regular hypergraph, $H=(V,E)$, constructs an instance $I$ of mixed $\MinCSP\Gamma$ as follows. The set of variables of $I$ contains, in addition to all nodes in $V$, some other fresh variables to be introduced later. Then, for every hyperedge $E=\{v_1,\dots,v_k\}$, add a copy of $J$ where the variables have been renamed so that $x_1=v_1,\dots,x_k=v_k$ and $y_1,\dots,y_n$ are different fresh variables (different for each hyperedge). All the constraints added so far are designated as hard. Finally, add for every $v\in V$ a soft constraint $(v,\{a\})$ requiring $v$ to take value $a$.

Note that as $k$ is fixed, this can be carried out in linear time. It follows from the construction of $I$ that for every independent set, $X$, of $H$ there is an assignment for $I$ satisfying all hard constraints that maps every node in $X$ to $a$ and every node in $V\setminus X$ to $b$. This assignment violates exactly $|V|-|X|$ soft constraints. Conversely, for every assignment $s$ in $I$, the set $X=\{v\in V \mid s(v)=a\}$ is an independent set of $H$.
\end{proof}

\medskip

We are finally in a position to obtain a contradiction. As discussed above, if $\MinCSP\Gamma$ admits a constant-factor approximation algorithm then so does its mixed variant. Let $\delta$ satisfy $0<\delta/2\leq 1- c\cdot\delta$ where $c$ is any constant larger
than the approximation factor for the mixed $\MinCSP\Gamma$ algorithm. 
We shall prove that, for every $k$, there is a polynomial time algorithm that $(1-c\cdot\delta,1-\delta)$-distinguishes $\independent_k$, obtaining a contradiction since this task is NP-hard, as follows by setting $\epsilon=\delta/2$ and $k=1+2/\delta$ in Theorem \ref{the:hyper}.
Let us prove our claim. In order to distinguish whether the size, $m$, of the maximal independent set of a $k$-regular
hypergraph $H=(V,E)$ is at most $(1-c\cdot\delta)|V|$ or at least $(1-\delta)|V|$ we do the following:
compute the instance $I$ of mixed $\MinCSP\Gamma$ using the linear algorithm
of Lemma \ref{le:reduction} and run the constant approximation algorithm for $\MinCSP\Gamma$ with instance $I$.
Then, we only need to compare the value of the assignment, $s$, returned by the approximation algorithm with $c\cdot\delta$ to safely distinguish between the two cases.
Indeed, if $m\leq(1-c\cdot\delta)|V|$ then the optimum, $\opt$, of instance $I$, has value at least $c\cdot\delta$ from
which it follows that the value of $s$ is necessarily at least $c\cdot\delta$. Otherwise, if $m\geq (1-\delta)|V|$
then the value of $\opt$ is at most $\delta$, from which it follows that the value of $s$ is less
than $c\cdot\delta$.
\end{proof}

\section{Conclusion}

We have reduced a classification of constant-factor approximable finite-valued CSPs to that for Min CSPs. Due to technical limitations, we
proved most of our results for constraint languages $\Gamma$ containing the equality relation $\eqC$. It is in open question whether
adding $\eqC$ can ever change constant-factor approximability of $\MinCSP\Gamma$.
We provided (Theorem~\ref{the:Lipschitz}) an algebraic characterisation of constraint languages $\Gamma$ such that ($\Gamma$ contains $\eqC$ and) the integrality gap of BLP for $\MinCSP\Gamma$ is finite. We conjecture that $\MinCSP\Gamma$ is constant-factor approximable for all such languages, even without the assumption on $\eqC$. One way to prove this could be to strengthen the algebraic characterisation so that it features only fractional operations that can be efficiently sampled from. We showed that this works, in particular, for all constraint languages $\Gamma$ for which $\MinCSP\Gamma$ was previously known to be constant-factor approximable.

On the hardness side, infinite integrality gap
is known~\cite{Ene13:local} to imply UG-hardness of constant-factor approximation for all constraint languages $\Gamma$ containing the equality relation $\eqC$. For a large subclass of such languages, we improved UG-hardness to NP-hardness. Proving this for all such languages is (probably) beyond current techniques, even for very special cases such as {\sc MinUnCut}, but a further extension of our subclass could be within reach.

It is an open question whether condition (2) in Theorem~\ref{the:Lipschitz} is decidable. We remark that the decidability question for
the related property of having symmetric polymorphisms of all arities (see Theorem~\ref{LP-sym}) is also open - see~\cite{Carvalho16:symmetric,Chen17:asking} for related results. However, another related property - of having so-called fractional symmetric polymorphisms of all arities - is decidable~\cite{Kolmogorov15:power}.

\section{Acknowledgments}
The authors would like to thank Per Austrin, Johan H\aa stad, and Venkatesan Guruswami for useful discussions. We would also like to thank the anonymous referees for useful comments.

\bibliographystyle{elsarticle-num}
\bibliography{csp2}


\end{document}